\newtheorem{theorem}{Theorem}
   \newtheorem{lemma}{Lemma}
\newtheorem{definition}{Definition}
\newcommand{\refsec}[1]{Section~\ref{#1}}
\newcommand{\refssec}[1]{Subsection~\ref{#1}}
\newcommand{\reffig}[1]{Figure~\ref{#1}}
\newcommand{\reflem}[1]{Lemma~\ref{#1}}
\newcommand{\refthm}[1]{Theorem~\ref{#1}}
\newcommand{\calO}{\mathcal{O}}
\newcommand{\calS}{\mathcal{S}}
\newcommand{\calX}{\mathcal{X}}
\newcommand{\calY}{\mathcal{Y}}
\newcommand{\calZ}{\mathcal{Z}}
\newcommand{\area}{\emph{Area}}
\newcommand{\boundary}{\emph{Boundary}}
\newcommand{\convex}{\emph{Convex}}
\newcommand{\viewingradius}{7}
\newcommand{\DiagA}{Diag-$\mathrm{A}$}
\newcommand{\DiagB}{Diag-$\mathrm{B}$}
\newcommand{\DiagAB}{Diag-$\{\mathrm{A}, \mathrm{B}\}$}
\title{Gathering Anonymous, Oblivious Robots on a Grid\\[1.5ex][Full Version]}
\author{
    Matthias Fischer
    \and
    Daniel Jung
    \and
    Friedhelm Meyer auf der Heide
}
\date{
    Heinz Nixdorf Institute \& Computer Science Department\\[0.2em]
    University of Paderborn (Germany)\\[0.2em]
    Fürstenallee 11, 33102 Paderborn\\[0.2em]
\quad\\[0.2em]
    \texttt{\{mafi,daniel.jung,fmadh\}@uni-paderborn.de}
}
\begin{document}
\bibliographystyle{alpha}
\maketitle
\thispagestyle{empty}
\begin{abstract}
	We consider a swarm of $n$ autonomous mobile robots, distributed on a 2-dimensional grid.
	A basic task for such a swarm is the gathering process:
	All robots have to gather at one (not predefined) place.
    A common local model for extremely simple robots is the following:
    The robots do not have a common compass, only have a constant viewing radius,
    are autonomous and indistinguishable, can move at most a constant distance in each step, cannot communicate, are oblivious and
    do not have flags or states.
    The only gathering algorithm under this robot model, with known runtime bounds, needs $\calO(n^2)$ rounds and works in the Euclidean plane.
    The underlying time model for the algorithm is the fully synchronous $\mathcal{FSYNC}$ model.
	On the other side, in the case of the 2-dimensional grid, the only known gathering algorithms for the same time and a similar local model
    additionally require a constant memory, states and ``flags'' to communicate these states to neighbors in viewing range.
    They gather in time $\calO(n)$.

    In this paper we contribute the (to the best of our knowledge) first gathering algorithm on the grid
	that works under the same simple local model as the above mentioned Euclidean plane strategy, i.e., without memory (oblivious),
    ``flags'' and states.
    We prove its correctness and an $O(n^2)$ time bound in the fully synchronous $\mathcal{FSYNC}$ time model.
	This time bound matches the time bound of the best known algorithm for the Euclidean plane mentioned above.
    We say gathering is done if all robots are located within a $2\times 2$ square, because in $\mathcal{FSYNC}$ such configurations
    cannot be solved.
\end{abstract}
\textbf{Keywords: gathering problem, autonomous robots, distributed algorithms, local algorithms, mobile agents, runtime bound, swarm formation problems}

\section{Introduction}
	Swarm robotics considers large swarms of relatively simple mobile robots deployed to some two- or three-dimensional area.
	These robots have very limited sensor capabilities; typically they can only observe aspects of their local environment.
	The objective of swarm robotics is to understand which global behavior of a swarm is induced by local strategies,
	simultaneously executed by the individual robots.
	Typically, the decisions of the individual robots are based on local information only.

	In order to formally argue about the impact of such local decisions of the robots on the overall behavior of the swarm,
	many simple models of robots, their local algorithms, the space they live in, and underlying time models are proposed.
	For a survey see the book \cite{flocchinioverview} by Flocchini, Prencipe, and Santoro.

	A basic desired global behavior of such a swarm is the gathering process: All robots have to gather at one (not predefined) place.
	Local algorithms for this process are defined and analysed for a variety of models
    \cite{gatheringthetanquadrat, MINCH, flocchinioverview, gathering-icalp, JungSPAA2016, JungIPDPS2016}.

    A common local model for extremely simple robots is the following:
    There is no global coordinate system.
    The robots do not have a common compass, only have a constant viewing radius,
    are autonomous and indistinguishable, can move at most a constant distance in each step, cannot communicate, are fully oblivious and
    do not have flags or lights to communicate a state to others.
    In this very restricted robot model, a robot's decision about its next action can only be based on the current relative positions of the
    otherwise indistinguishable other robots in its constant sized viewing range, and independent on past decisions or information (oblivious).

    The only gathering algorithm under this robot model, with known runtime bounds, needs $\calO(n^2)$ rounds and works in the Euclidean plane.
    The underlying time model for the algorithm is the fully synchronous $\mathcal{FSYNC}$ model (see \cite{localgathering, flocchinioverview}).
    In $\mathcal{FSYNC}$, all robots are always active and do everything synchronously.
    Time is subdivided into equally sized rounds of constant lengths.
    In every round all robots simultaneously execute their operations in the common \emph{look-compute-move model} \cite{Cohen:2004a} (\refsec{sec:OurLocalModel}).

    In the discretization of the Euclidean plane, the two-dimensional grid, under the same time and robot model,
    no runtime bounds for gathering are known.
    The concept of the Euclidean algorithm \cite{gatheringthetanquadrat} cannot be transferred to the grid, because it must be able
    to compute the center of the minimum enclosing circle of the robots in its viewing range (and then move to this position) and
    furthermore move arbitrary small distances.
    This clearly is impossible on the grid. Instead, completely different approaches are needed.
    
    In the only known gathering algorithms on the grid
    under the same time and a similar robot model, the robots need states (so-called runs) and flags
    to communicate these states to neighbors, and have to be able to memorize a fixed number of steps \cite{JungSPAA2016,JungIPDPS2016}.
    There, a robot with an active run state can further move this state to a neighboring robot.
    This allows coordinated robot operations over several consecutive rounds.
    In \cite{JungSPAA2016,JungIPDPS2016}, these operations are crucial for total running time proof ($\calO(n)$ rounds).

    In the current submission, \underline{we drop} the additional robot capabilities \underline{memory} and \underline{flags} or \underline{lights} to communicate a state to others.
    Then, analogously to the Euclidean strategy \cite{gatheringthetanquadrat}, explained above, a robot's decision about its next action
    can only be based on the current relative positions of the otherwise indistinguishable other robots in its constant sized viewing range,
    and independent on past decisions or information (oblivious).
    Especially coordinated robot operations over several consecutive rounds that are used in \cite{JungSPAA2016,JungIPDPS2016} cannot be performed under this more restricted model.
        
    To the best of our knowledge we
    present the first strategy \underline{under this restricted model on the grid} and prove a total running time of $\calO(n^2)$ rounds which complies with the best known running time for the Euclidean strategies in this model \cite{gatheringthetanquadrat}.
    More precisely, the running time of our strategy depends quadratically on the outer boundary length of the swarm.
	The outer boundary is the seamless sequence of neighboring robots that encloses all the others robots inside.

    We conjecture that $\Omega(n^2)$ is a lower bound for the number of rounds needed for our algorithm and, more generally,
    even for any algorithm within our restricted model.
    At least for our algorithm, we conjecture that a worst case instance is a configuration with robots on the boundary of an
    axis-parallel square. Experiments support this conjecture.
    Our paper has been accepted for ALGOSENSORS 2017 \cite{JungALGO2017}.
\section{Related Work}\label{sec:relatedwork}
    There is vast literature on robot problems researching how specific coordination problems can be solved by a swarm of robots given a certain limited set of abilities.
    The robots are usually point-shaped (hence collisions are neglected) and positioned in the Euclidean plane.
    They can be equipped with a memory or are \emph{oblivious}, i.e., the robots do not remember anything from the past and perform their actions only on their current views.
    If robots are anonymous, they do not carry any IDs and cannot be distinguished by their neighbors.
    Another type of constraint is the compass model:
    If all robots have the same coordinate system, some tasks are easier to solve than if all robots' coordinate systems are distorted.
    In \cite{gathering-compasses,Izumi2012} a classification of these two and also of dynamically changing compass models is considered, as well as their effects regarding the gathering problem in the Euclidean plane.
    The operation of a robot is considered in the \emph{look-compute-move model} \cite{Cohen:2004a}.
    How the steps of several robots are aligned is given by the \emph{time model}, which can range from an asynchronous $\mathcal{ASYNC}$ model (e.g., see \cite{Cohen:2004a}), where even the single steps of the robots' steps may be interleaved, to a fully synchronous $\mathcal{FSYNC}$ model (e.g., see \cite{localgathering}), where all steps are performed simultaneously.
    A collection of recent algorithmic results concerning distributed solving of basic problems like gathering and pattern formation, using robots with very limited capabilities, can be found in the book \cite{flocchinioverview} by Flocchini et al.
   
    One of the most natural problems is to gather a swarm of robots in a single point.
    Usually, the swarm consists of point-shaped, oblivious, and anonymous robots. The problem is widely studied in the Euclidean plane.
    Having point-shaped robots, collisions are understood as merges/fusions of robots and interpreted as gathering progress \cite{MINCH,gatheringthetanquadrat, JungSPAA2016}.
    In \cite{gathering-icalp} the first gathering algorithm for the $\mathcal{ASYNC}$ time model with multiplicity detection (i.e., when a robot can detect if other robots are also located at its own position) and global views is provided.
    Gathering in the local setting was studied in \cite{localgathering}.
    In \cite{impossibilityofgathering} situations when no gathering is possible are studied.
    The question of gathering on graphs instead of gathering in the plane was considered in \cite{practicalrendevouzaktuell, rendezvousingraphen, gatheringOnRing}.
    In \cite{Stefano2013} the authors assume global vision, the $\mathcal{ASYNC}$ time model and furthermore allow unbounded (finite) movements.
    They show optimal bounds concerning the number of robot movements for special graph topologies such as trees and rings.
    
    Concerning the gathering on grids,
    in \cite{gatheringongrids} it is shown that multiplicity detection is not needed and the authors further provide a characterization of solvable gathering configurations on finite grids.
    In \cite{OptExactGatheringGrids2014}, these results are extended to infinite grids, assuming global vision.
    The authors characterize \emph{gatherable} grid configurations concerning exact gathering in a single point.
    Under their robot model and the $\mathcal{ASYNC}$ time model, the authors present an algorithm which gathers \emph{gatherable} configurations
    optimally concerning the total number of movements.
    
    Assuming only local capabilities of the robots, esp.\ only local vision and no compass, makes gathering challenging.
    For example, a given global vision, the robots could compute the center of the globally smallest enclosing square or circle and just move to this point.
    For gathering with presence of a global compass, the authors in \cite{Cellular2016} provide a simple gathering algorithm:
    The robots from the left and right swarm boundaries keep moving towards the swarm's inside.
    In some kind of degenerated cases, instead the robots on the top and bottom boundaries do this.
    
    In the $\mathcal{FSYNC}$ time model, the total running time is a quality measure of an algorithm.
    In this time model there exist several results that prove runtime bounds \cite{gatheringthetanquadrat,hopper,gtm,JungIPDPS2016,JungSPAA2016}.
    For local robot models, the locality strongly restricts the robot capabilities: no global control, no unique IDs, no compass, only local vision (i.e., they can only see other robots up to a constant distance) and no (global) communication.
    But even under this strongly local model, the presence of remaining local capabilities such as allowing a constant number of states,
    constant memory or locally visible states (flags, lights), can drastically change running times by even more than the factor $n$ \cite{hopper,JungIPDPS2016,JungSPAA2016}.
    The price for this improvement then is many more complicated strategies.
    
    One example are strategies that maintain and shorten a communication chain between an explorer and a base camp.
    The \emph{Hopper} and \emph{Manhattan Hopper} strategies \cite{hopper} solve this problem in time $\calO(n)$,
    in the Euclidean plane and on the grid, respectively, using robots with a constant number of states, a constant
    memory and the capability to communicate states to local neighbors (flags, lights).
    Without these additional robot capabilities, the simple Euclidean \emph{Go-To-The-Middle} strategy \cite{gtm} needs notably more time $\calO(n^2\log(n))$ for solving the same problem.
    Concerning the gathering under this restricted model, the simple, Euclidean \emph{Go-To-The-Center} strategy
    \cite{gatheringthetanquadrat} needs time $\calO(n^2)$.
    (A faster strategy for the Euclidean plane does not exist, yet, and it is still unknown if this bound is tight.)
    On the grid, two asymptotically optimal $\calO(n)$ strategies exist that, solve the gathering of an arbitrary
    connected swarm \cite{JungSPAA2016} and the gathering of a closed chain of robots \cite{JungIPDPS2016}, respectively.
    Like the above communication chain strategies, they require more complex robots with a const.\ number of states,
    a const.\ memory and the capability to communicate states to local neighbors (flags, lights).
    Strategies without these additional capabilities do not exist, yet.

    In the present paper, we deliver such an algorithm that uses the same strongly restricted model as the Euclidean \emph{Go-To-The-Center} gathering strategy \cite{gatheringthetanquadrat}.
    Our strategy gathers in time $\calO(|\mathrm{outer\ boundary}|^2)\subseteq\calO(n^2)$, where
    $|\mathrm{outer\ boundary}|$ denotes the length of the swarm's outer boundary (cf.\ \reffig{fig:boundaryfullex_convexvertexdefi}.$(i)$) which naturally
    is $\in\calO(n)$.
    This is comparable to the $\calO(n^2)$ bound for the Euclidean gathering \cite{gatheringthetanquadrat}.
    We conjecture that $\calO(|\mathrm{outer\ boundary}|^2)$ is also tight on the grid.
\section{Our Local Model}\label{sec:OurLocalModel}
        Our mobile robots need very few and simple capabilities:
        A robot moves on a two-dimensional grid and can change its position to one of its eight horizontal, vertical or diagonal neighboring grid cells.
		It can see other robots only within a constant \emph{viewing radius} of \viewingradius\ (measured in $L_1$-distance).
		We call the range of visible robots the \emph{viewing range}.
        Within this viewing range, a robot can only see the relative positions of the viewable robots.
        The robots have no compass, no global control, and no IDs.
        They cannot communicate,
        do not have any states (no flags, lights) and are oblivious.

        Our algorithm uses the fully synchronous time model $\mathcal{FSYNC}$,
		in that all robots are always active and do everything synchronously.
        Time is subdivided into equally sized rounds of constant lengths.
        In every round all robots simultaneously execute their operations in the common \emph{look-compute-move model} \cite{Cohen:2004a}, which divides one operation into three steps.
        Every round contains only one cycle of these steps:
        In the \emph{look} step, the robot gets a snapshot of the current scenario from its own perspective, restricted to its constant-sized viewing range.
        During the \emph{compute} step, the robot computes its action, and eventually performs it in the \emph{move} step.
        If a robot has moved to an occupied grid cell, the robots from then on behave like one robot.
        We say they \emph{merge} and remove one of them.

		We say gathering is done if all robots are located within a $2\times 2$ square, because such configurations cannot
		be solved in our time model.

        The swarm must be connected.
        In our model, two robots are connected if they are located in horizontal or vertical neighboring grid cells.
        The operations of our algorithm do not destroy this connectivity.
\section{The Algorithm}
    \label{sec:algo}
    A robot decides to hop on one of its $8$ neighboring grid cells only dependent on the current robot positions within its viewing range.
    We distinguish diagonal (\refssec{ssec:diaghops}) and horizontal/vertical (\refssec{ssec:horizverthops}) hops.
    The hops are intended to achieve the gathering progress by modifying the swarm's \emph{outer boundary}.
    \reffig{fig:boundaryfullex_convexvertexdefi}.$(i)$ defines the swarm's boundaries:
    Black and hatched robots are \emph{boundary} robots.
    The boundary on which the black robots are located borders the swarm and is called the swarm's \emph{outer boundary}.
    In the figure, all other robots are colored grey. White cells are empty.
    \subsection{Diagonal hops}\label{ssec:diaghops}
        \begin{figure}[h]
        \centering
            \includegraphics[width=0.85\textwidth]{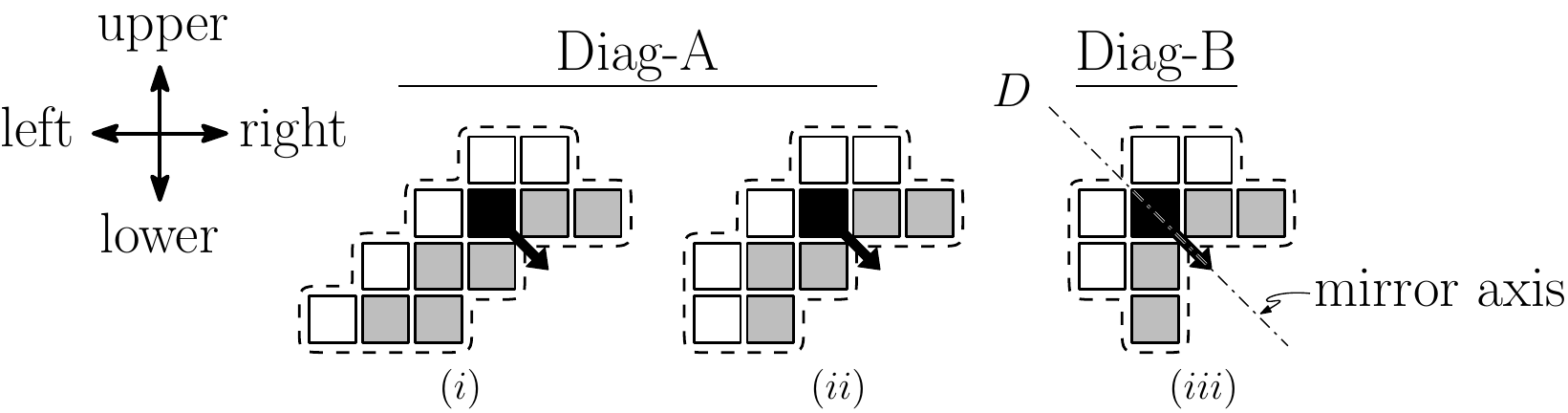}
            \caption{\textbf{Hop patterns:} One of the \DiagA\ or \DiagB\ Hop patterns must match the relative robot positions within the black robot's viewing range.
                This is the hop criterion, necessary for allowing the black robot to perform the depicted diagonal hop.
                In this paper, the notation \DiagAB\ means ``\DiagA\ or \DiagB''.
                }
            \label{fig:hops_diag}
        \end{figure}
        \begin{figure}[b]
        \centering
            \includegraphics[width=0.85\textwidth]{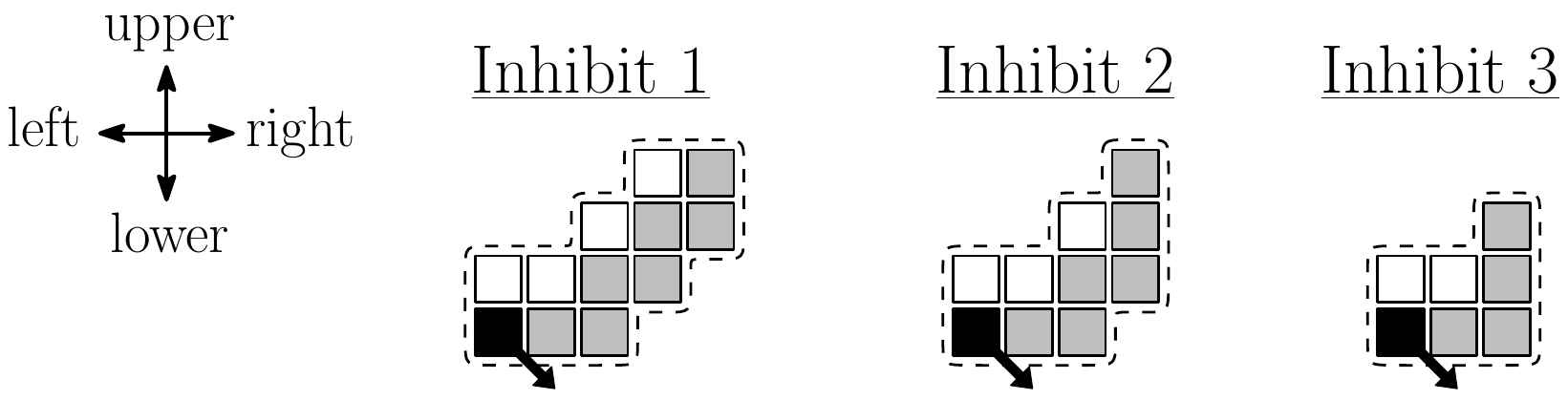}
            \caption{\textbf{Inhibit patterns:} Patterns that, in case they match, inhibit the black robot's hop.}
            \label{fig:hops_diag_inhibit}
        \end{figure}
        If a robot $r$ (marked black in Fig. \ref{fig:hops_diag}/\ref{fig:hops_diag_inhibit}) checks whether it can execute a diagonal hop, it compares the patterns of \reffig{fig:hops_diag} and \ref{fig:hops_diag_inhibit}
        to the robot positions in its viewing range:
        Robot $r$ checks if one of the \DiagAB\ Hop patterns matches the current scenario from its own perspective.
        Patterns that are created by an arbitrary horizontal and vertical mirroring and an arbitrary 90 degree rotation of the three patterns of \reffig{fig:hops_diag} are also valid and have to be checked.

		Depending on the matching Hop pattern the robot does the following: \label{enum:diagab}
        \begin{enumerate}
            \item If a \DiagA\ pattern matches, then robot $r$ checks if, using the same rotation and mirroring, any of the Inhibit patterns
                match the \textit{upper-right area} of its viewing range.
                If at least one Inhibit pattern matches, then the \DiagA\ hop of robot $r$ is not executed.
                Otherwise, if none of the Inhibit patterns match, the robot $r$ hops according to the matching \DiagA\ pattern.\label{enum:DiagAInhibitCheck}
            \item If the \DiagB\ pattern matches, then the robot $r$ checks if, using the same rotation and mirroring, also any of the
           		Inhibit patterns match the \textit{upper-right area} and the \emph{lower-left area} of its viewing range.
               	However, in case of the \emph{lower-left area}, the Inhibit pattern has to be mirrored at the diagonal mirroring axis $D$ shown in \reffig{fig:hops_diag}.$(iii)$.
               	If for both areas matching Inhibit patterns have been found, then the \DiagB\ hop of robot $r$ is not executed.
                Otherwise the robot $r$ hops according to the matching \DiagB\ pattern.\label{enum:DiagBInhibitCheck}
        \end{enumerate}
    \subsection{Horizontal and vertical hops (HV hops)}\label{ssec:horizverthops}
        Robots can also hop in vertical or horizontal direction (HV hop).
        We allow these hops for length 1 and 2 (cf.\ Figure~\ref{fig:hops_merge12}).
        For length 2, horizontal or vertical hops, respectively, are a joint operation of two neighbouring robots.
        If for a robot a horizontal and a vertical HV hop apply at the same time (see $b^ \star$ of Fig. \ref{fig:hops_merge12}), then it instead
        performs a diagonal hop as shown in the figure.
        \begin{figure}[h]
            \centering
            \includegraphics[width=0.98\textwidth]{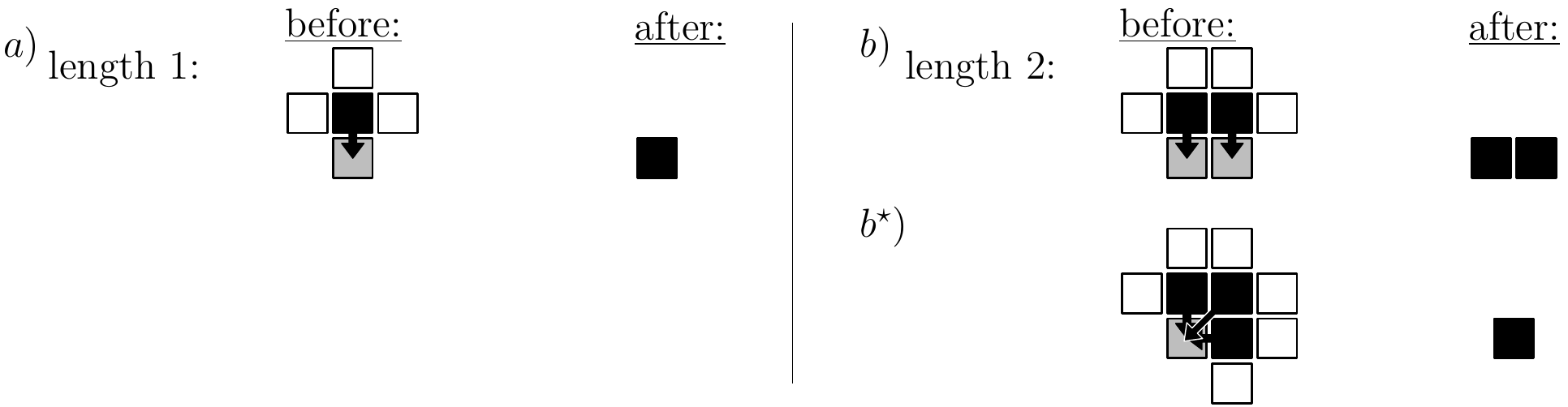}
            \caption{The black robots simultaneously hop downwards. Afterwards, robots that are located at the same position \emph{merge}.}
            \label{fig:hops_merge12}
        \end{figure}
        After a HV hop, every target cell contains at least two robots.
        We let these robots \emph{merge}: i.e., we remove all but one of the robots at the according cell.

        \DiagAB\ and HV hops are executed simultaneously in the same step of the algorithm.
        %
        %
        In summary, all robots synchronously execute the algorithm, shown in \reffig{fig:algo}.
        \begin{figure}[h]
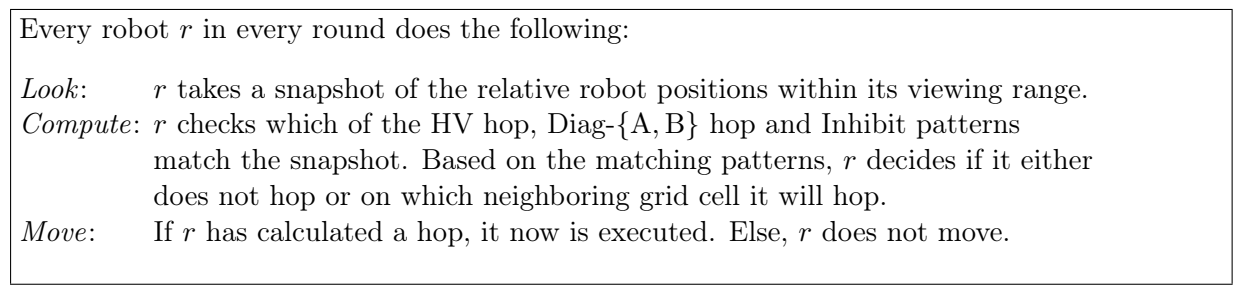

            \fbox{\parbox{0.97\columnwidth}{
                Every robot $r$ in every round does the following:
                \begin{tabbing}
                    \emph{Compute}: \= \kill
                    \emph{Look}:\> $r$ takes a snapshot of the relative robot positions within its viewing range.\\
                    \emph{Compute}:\> $r$ checks which of the HV hop, \DiagAB\ hop and Inhibit patterns\\\> match the snapshot. 
                        Based on the matching patterns, $r$ decides if it either\\\> does not hop or on which neighboring grid cell it will hop.\\
                    \emph{Move}:\> If $r$ has calculated a hop, it now is executed. Else, $r$ does not move.
                \end{tabbing}
            }}
            \caption{The algorithm.}
            \label{fig:algo}
        \end{figure}
\section{Measuring the Gathering Progress}
    The gathering progress measures that we will use for the analysis of our strategy are heavily dependent on the length and shape
    of the swarm's outer boundary.
	In order to analyze these measures, we need the terms boundary, outer boundary, length, as well as convex and concave vertices.

 	\paragraph{{\bfseries Swarm's boundary.}}
	The swarm's boundary is the set of all robots that have at least one empty adjacent cell in a horizontal, vertical, or diagonal direction.
	\reffig{fig:boundaryfullex_convexvertexdefi}.$(i)$ shows an example: 
	Black and hatched robots are boundary robots.
	The empty cells contain no robot and are colored in white.
	When speaking about a \emph{subboundary}, we mean a connected sequence of robots of some boundary.
 	\paragraph{{\bfseries Swarm's outer boundary.}}    
    The swarm's outer boundary is the boundary that borders on the outside of the swarm.
    In \reffig{fig:boundaryfullex_convexvertexdefi}.$(i)$ the black robots belong to outer boundary.
    All other robots are not part of the boundaries, i.e., they have an adjacent robot in all directions (horizontal, vertical, and diagonal).
    In \reffig{fig:boundaryfullex_convexvertexdefi}.$(i)$ they are colored grey.
    \begin{figure}[h]
       \begin{center}
           \includegraphics[width=\textwidth]{}
       \end{center}
       \caption{$(i)$: Definition of the (\emph{outer}) \emph{boundary}. Outer boundary: black robots; $(ii)$: Definition of \emph{convex} and \emph{concave} \emph{vertex}. Convex vertices: fat curves; $(iii)$: Cross shape. The ``$\ast$'' marked robot is counted four times; $(iv)$: Hourglass shape: The ``$\ast$'' marked robots are counted twice; $(v)$: $\pm 180^\circ$ rotation. The ``$\ast$'' marked robot is counted twice.; $(vi)$: During the gathering, inner bubbles can be developed.}
       \label{fig:boundaryfullex_convexvertexdefi}
    \end{figure}
    \paragraph{{\bfseries Outer boundary's length.}}
    We measure the outer boundary's length as follows:
    We start at a cell of the outer boundary and perform a complete walk along this boundary while we define the
    \emph{length} as the total number of steps that we performed during this walk.
    This means that if the swarm is hourglass- or cross-shaped, for example, some robots are counted multiple (up to four) times (cf.\ \reffig{fig:boundaryfullex_convexvertexdefi}.$(iii,iv)$).
    Furthermore, robots on which a turn by $\pm 180^\circ$ is performed during the walk are counted twice (cf.\ \reffig{fig:boundaryfullex_convexvertexdefi}.$(v)$).
    We denote this length by $|\mathrm{outer\ boundary}|$.
	\paragraph{{\bfseries Convex and concave vertices.}}
    On the boundary we further distinguish \emph{convex} and \emph{concave vertices}.
    A vertex of the boundary is a robot that looks like a corner.
    In \reffig{fig:boundaryfullex_convexvertexdefi}.$(ii)$ fat curves mark \emph{convex vertices} of the swarm's outer boundary, while
    the thin curves mark the \emph{concave vertices} of the swarm's outer boundary.
    \paragraph{{\bfseries Outline of the running time proof  --- How we get gathering progress.}}
        We distinguish three kinds of progress measures that help us to prove the quadratic running time.
        \begin{tabbing}
            \textbf{\boundary:} \=\kill
            \textbf{\boundary:} \> Length of the swarm's outer boundary.\label{enum:lengthboundary_intro}\\
            \textbf{\convex:} \> Difference between the number of convex vertices on the swarm's\\\> outer boundary and its maximum value.\label{enum:numberconvex_intro}\\
            \textbf{\area:} \> Included area.\label{enum:includedarea_intro}
        \end{tabbing}
        We have designed the hops in such a way that the length of the outer boundary (\boundary) never increases.
        But it can remain unchanged over several rounds.
        Then, instead, we measure the progress by \convex.
        As we draw robots as squares, the total number of convex vertices on the swarms' outer boundary is naturally upper bounded by its maximum value $4|\mathrm{outer\ boundary}|$.
        \convex\ is the difference between this maximum value and the actual number of convex vertices on the outer boundary.
        We will show that also \convex\ never increases.

        In rounds in which both \boundary\ and \convex\ do not achieve progress, we instead measure the gathering progress by \area.
        We measure \area\ as the number of robots on the swarm's outer boundary plus the number of inside cells (occupied as well as empty ones).
        In contrast to the other progress measures, \area\ does not decrease monotonically in general, but we show that it decreases monotonically in rounds
        without \boundary\ and \convex\ progress.
        We upper bound the size by that the \area\ can instead be increased during other rounds and show that this
        makes the total running time worse at most by a constant factor.

        All three measures depend only on the length of the swarm's outer boundary.
        While \boundary\ and \convex\ are linear, \area\ is quadratic.

        This then leads us to a total running time $\calO(|\mathrm{outer\ boundary}|^2)$.

\section{Correctness \& Running Time}\label{sec:corrruntime}
    In this section, we formally prove the correctness of the progress measures and
    finally the total running time (\refthm{thm:runningtime}). 
    \subsection{Progress Measure \boundary} 
        \begin{lemma}\label{lem:boundaryprogress}
            During the whole gathering, \boundary\ is monotonically decreasing.
        \end{lemma}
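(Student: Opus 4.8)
The plan is to treat the length of the outer boundary as a sum of local contributions along the closed boundary walk, and to show that no single round can increase this sum. Since every hop is triggered purely by the robot positions inside the constant-sized viewing range (radius \viewingradius), the change caused by an admissible hop is confined to a bounded neighbourhood of the hopping robot. I would therefore first establish that it suffices to verify, for each hop type in isolation, that it does not lengthen the portion of the boundary walk passing through the affected cells, and then handle the interaction of simultaneous hops separately. Throughout I may use that the swarm stays connected (so the outer boundary remains a single closed walk) and that the lemma concerns only the \emph{outer} boundary, so the possible creation of inner bubbles (\reffig{fig:boundaryfullex_convexvertexdefi}.$(vi)$) is irrelevant here.

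For the per-hop verification I would work through the finite catalogue of admissible hops. For the HV hops (\reffig{fig:hops_merge12}) the defining property is that after the hop every target cell is occupied by at least two robots, which then merge; removing the surplus robot deletes at least one cell from the boundary walk, and a direct inspection of the length-$1$ and length-$2$ patterns shows that the remaining local walk is no longer than before. For the diagonal hops, the \DiagA\ and \DiagB\ patterns of \reffig{fig:hops_diag} place the black robot at a locally convex position of the outer boundary; moving it one cell diagonally inward replaces the two boundary steps incident to it by a shorter-or-equal local connection, so that contribution does not grow either. The outcome of this step is that each hop, considered alone, changes \boundary\ by a non-positive amount.

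The main obstacle, and the place where the argument really has to do work, is the simultaneity of the $\mathcal{FSYNC}$ round: many boundary robots may hop at once, and two robots whose viewing ranges overlap could each perform a locally harmless hop whose \emph{combined} effect nevertheless lengthens the shared piece of boundary or tears it apart. This is exactly what the Inhibit patterns of \reffig{fig:hops_diag_inhibit} are designed to prevent. I would therefore enumerate the geometric ways in which two admissible diagonal hops, and a diagonal hop together with an HV hop, can sit adjacently on the boundary, and show that in every dangerous configuration one of the Inhibit patterns matches --- in the upper-right area for \DiagA, and in both the upper-right and the diagonally mirrored lower-left area for \DiagB\ --- so that the conflicting hop is suppressed. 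The comparatively large viewing radius \viewingradius\ is precisely what lets a hopping robot ``see'' the neighbour that would otherwise collide with it. Once the Inhibit patterns are shown to break up all such conflicts, the surviving simultaneous hops act on disjoint (or only harmlessly overlapping) stretches of the boundary walk, so the total change in \boundary\ decomposes into the sum of the individual non-positive local changes established above; hence \boundary\ never increases. I expect the conflict enumeration and the verification that the Inhibit patterns cover every dangerous adjacency to be the most tedious and error-prone part of the proof.
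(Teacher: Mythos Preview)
Your plan is substantially more elaborate than what the lemma actually requires, and it rests on a misconception about the diagonal hops. You write that the \DiagAB\ patterns ``place the black robot at a locally convex position of the outer boundary; moving it one cell diagonally inward\ldots''. This is not true in general: the same \DiagAB\ pattern can match at a \emph{concave} position of the outer boundary, in which case the hop is towards the swarm's \emph{outside} (compare the two columns of \reffig{fig:convexanalysis}, where $(ii)$ is $(i)$ with INSIDE and OUTSIDE switched). So the ``replace two steps by a shorter connection'' picture does not cover all cases, and you cannot simply declare the diagonal contribution non-positive on those grounds.

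The paper's proof avoids all of this by a much cruder counting argument that does not track the walk locally and does not invoke the Inhibit patterns at all. For HV hops the target cell is already occupied by definition, so the set of occupied boundary cells cannot grow. For a \DiagAB\ hop, a single robot moves; the only way this could add a new boundary cell is if the target cell was empty, but then the hopping robot had an empty neighbour and was therefore already on the outer boundary before the hop---so again no net increase. Because this argument is about which cells are occupied rather than about how adjacent hops interact, simultaneity is a non-issue: each hop individually fails to enlarge the boundary, and that is enough. The Inhibit-pattern analysis you anticipate as ``the most tedious and error-prone part'' is genuinely needed later (for the \convex\ lemma, \reflem{lem:convexprogress}), but not here.
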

        \begin{proof}
            As the definition of HV hops requires that the robots hop onto occupied cells, such hops naturally cannot
            increase the number of robots on the outer boundary.
            So we consider \DiagAB\ hops in which robots hop towards the swarm's outside.
            In order to increase the number of robots on the outer boundary, the target cell of such hops
            must be empty.
            But then, the hopped
            robot has also been part of the outer boundary before the hop, so that the boundary length did not increase.
        \end{proof}
    \subsection{Impact of Inhibit Patterns: \emph{Collisions}}\label{ssec:inhibitimpact}
        For the proofs of the progress measures \convex\ and \area, we need a deeper insight why certain robot hops are inhibited
        by Inhibit patterns (cf.\ \reffig{fig:hops_diag_inhibit}).
        When proving that the \convex\ progress is monotonically decreasing (\reflem{lem:convexprogress}), we analyze the 
        change of the total number of convex vertices that is induced by the robot hops.
        This is only possible if certain simultaneously hopping robots are not too close together.
        Inhibit patterns ensure this minimum distance.

        Cf.\ \reffig{fig:hops_diag} and \ref{fig:hops_diag_inhibit}.
        From a more global point of view, the Inhibit patterns ensure that the black robot only performs its \DiagA\ or \DiagB\ hop,
        respectively, if the next robot(s) at distance $2$ along the boundary does (do) not perform a hop in the opposite direction.
        If the hop of the black robot is blocked by Inhibit patterns, we will say the robots \emph{collide}.
        This will be used in the proofs of our progress measures.
        \reffig{fig:collision_ex} shows significant collision examples:
        $(i)$: For both, $r$ and $r'$ \DiagA\ matches.
        Without inhibition patterns, $r$ would hop to the lower right, while $r'$ would hop in the opposite direction to the upper left.
        But the Inhibit 1 pattern inhibits the hop of $r$:
        $r$ collides with $r'$.
        Analogously also the hop of $r'$ can be inhibited.
        $(ii)$: If \DiagB\ matches for $r$, then the hop is inhibited if concerning both $r'$ and $r''$ an inhibition pattern matches.
        In this example, for $r'$ a \DiagA\ pattern and for $r''$ a \DiagB\ pattern could else enable hops in the opposite direction than the hop of $r$,
        but the matching Inhibit 1 and 2 patterns inhibit the hop of $r$: $r$ collides with $r',r''$.
        A more detailed analysis of collisions is provided in the proofs for \reflem{cor:area}.
        \begin{figure}[h]
        \centering
            \includegraphics[width=0.9\textwidth]{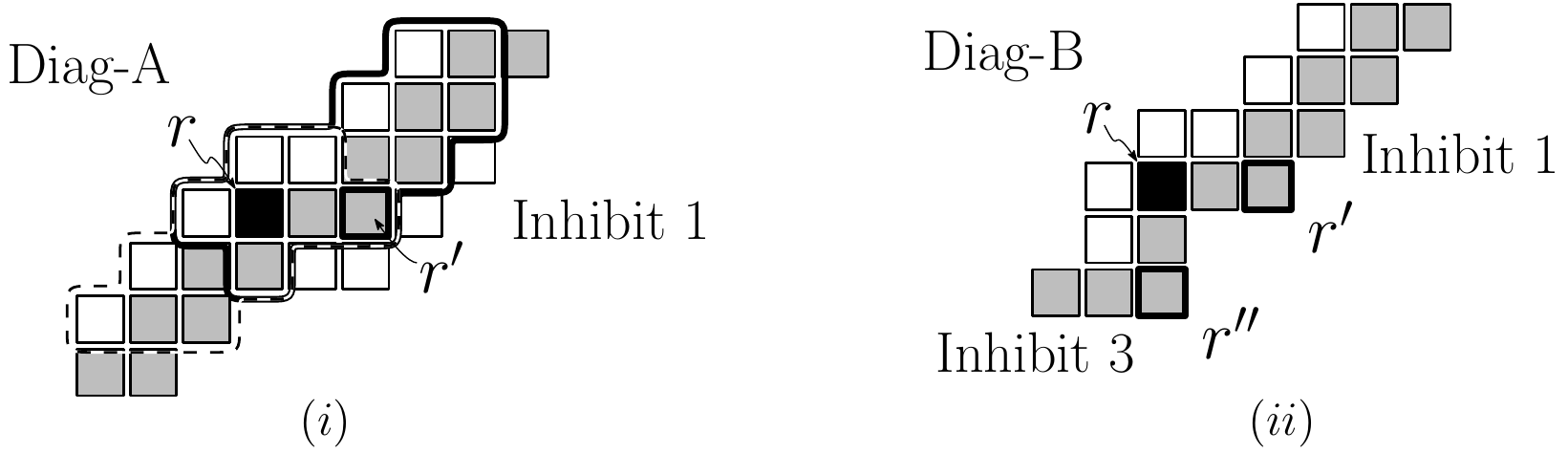}
            \caption{\emph{Collisions}. $(i)$: For $r$, \DiagA\ and Inhibit 1 matches. $r$ does not hop. $(ii)$: For $r$, \DiagB\ and Inhibit 1 and 3 match. $r$ does not hop.}
            \label{fig:collision_ex}
        \end{figure}
    \subsection{Progress measure \convex}
        \begin{lemma}\label{lem:convexprogress}
            During the whole gathering, \convex\ is monotonically decreasing.
        \end{lemma}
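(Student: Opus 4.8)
The plan is to reduce the claim to a local counting inequality. Write $B=|\mathrm{outer\ boundary}|$ for the outer boundary length and $c$ for the actual number of convex vertices on it, so that by definition \convex\ equals $4B-c$. Over one round the change is $4\,\Delta B-\Delta c$, hence \convex\ does not increase precisely when $\Delta c \ge 4\,\Delta B$. Since \reflem{lem:boundaryprogress} already gives $\Delta B\le 0$, this asks that the convex-vertex count may only shrink in proportion to the boundary shrinkage, and in particular must not drop at all in a round where $B$ stays the same (strict decrease, giving actual progress, is a separate matter left to the running-time analysis).

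First I would fix a round and decompose its effect on the outer boundary into the contributions of the individual hops. Walking along the closed boundary walk, each \DiagAB\ hop and each HV hop (including the special diagonal hop triggered when a horizontal and a vertical HV hop apply simultaneously, cf.\ \reffig{fig:hops_merge12}) rewrites only a bounded-length stretch of the walk. For every hop type, together with all its admissible rotations and reflections, I would tabulate the local change in boundary length and in the convex count within the rewritten stretch and verify the per-hop inequality $\Delta c\ge 4\,\Delta B$. HV hops land on occupied cells and merge, so they contribute a strictly negative $4\,\Delta B$ and satisfy the inequality with room to spare; the tight case is a \DiagAB\ hop that leaves $B$ locally unchanged, where I must confirm that the hop does not decrease $c$.

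The crucial step is to show that the per-hop contributions are additive, i.e.\ that the rewritten stretches of distinct simultaneous hops are disjoint and non-interfering, so that the round's total change in \convex\ is the sum of non-positive local contributions. This is exactly the role of the Inhibit patterns: as explained in \refssec{ssec:inhibitimpact}, a black robot performs its \DiagAB\ hop only when the robot(s) at distance $2$ along the boundary do not hop in the opposing direction. I would turn this into a precise minimum-spacing statement guaranteeing that the boundary segments affected by two simultaneously executed, oppositely directed hops cannot overlap, which is what makes the local tabulation globally valid.

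The main obstacle I expect is the residual case analysis where the bounded neighborhoods of two \emph{allowed} simultaneous hops still abut: near convex and concave corners, at robots counted multiply in cross- and hourglass-shaped boundaries (\reffig{fig:boundaryfullex_convexvertexdefi}.$(iii,iv)$), at the $\pm180^\circ$ turns counted twice (\reffig{fig:boundaryfullex_convexvertexdefi}.$(v)$), and where a hop pinches off an inner bubble (\reffig{fig:boundaryfullex_convexvertexdefi}.$(vi)$). For these I would check that the collision rules of \refssec{ssec:inhibitimpact} rule out exactly the adjacent-hop patterns that could otherwise cause a net loss of convex vertices exceeding $4\,\Delta B$, so that the additive, non-positive bound persists in every remaining configuration and \convex\ is monotonically decreasing.
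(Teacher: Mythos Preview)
Your plan is essentially the paper's own proof: the paper also splits into HV hops (convex count drops by at most $2$ while $B$ drops by at least $2$, so \convex\ cannot increase) and \DiagAB\ hops, and for the latter performs a local case analysis at each corner, using the Inhibit patterns to guarantee that the two neighbouring robots $a,a'$ on the boundary do not hop, which is exactly your ``minimum-spacing'' argument. The one nuance worth adjusting in your outline is the strict per-hop disjointness: the paper does not insist on disjoint rewritten stretches but instead absorbs the interaction of a hopping robot with its immediate same-direction neighbour $b$ into the local case analysis (the $\mathrm{A}^\star,\mathrm{B}^\star$ subcases), and it explicitly checks the exceptional situation where the diagonal hop lands next to an already occupied outside cell, in which case $c$ can drop but $B$ drops as well (possibly by forming an inner bubble), keeping $4\,\Delta B-\Delta c\le 0$.
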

        \begin{proof}
            If we say ``convex/concave vertices'', we consider only the outer boundary.
            First, we analyze the HV hops.
            Here, a HV hop can reduce the number of convex vertices by at most 2.
            At the same time, the outer boundary becomes shorter by at least 2.
            Then, \convex\ either remains unchanged or decreases.

            Concerning the \DiagAB\ hops, we look at \reffig{fig:convexanalysis}:
            \begin{figure}[h]
               \begin{center}
                   \includegraphics[width=\textwidth]{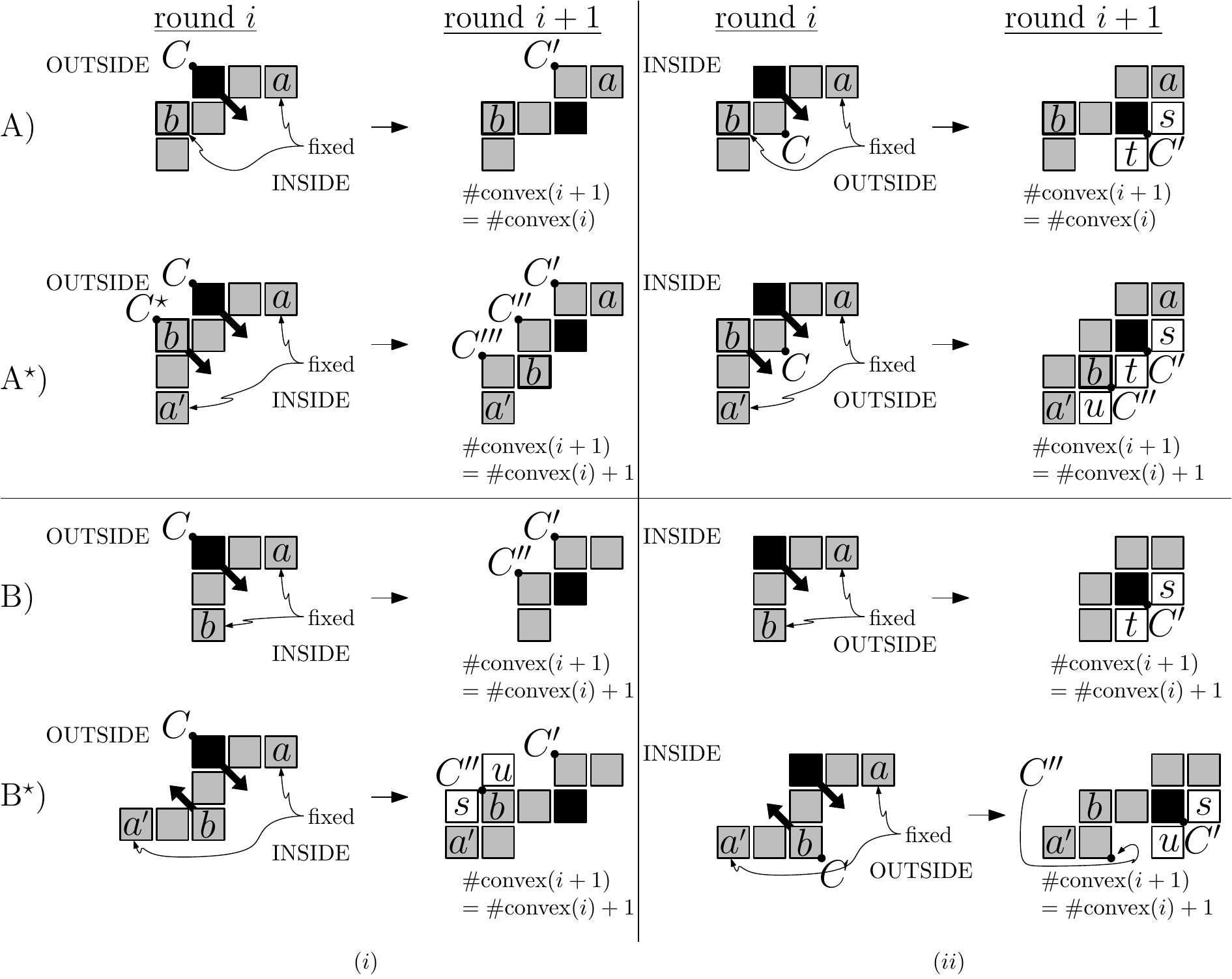}
               \end{center}
               \caption{Local effect of all kinds of hops on the number of convex vertices on the outer boundary. $C,C',C''$ denote the counted convex vertices. \#convex$(i)$ denotes the number of convex vertices in round $i$.}
               \label{fig:convexanalysis}
            \end{figure}
            The figure shows how the diagonal hops can (locally) change the number of convex vertices on the outer boundary.
            (In the figure, $(ii)$ shows the hops from $(i)$, but for switched INSIDE and OUTSIDE.)
            In all cases, the Inhibit patterns ensure that the robots $a,a'$ do not move (cf.\ \refssec{ssec:inhibitimpact}).
            In the figure, we distinguish \DiagA\ and \DiagB\ hops, while $\mathrm{A},\mathrm{A}^\star$ refer to \DiagA\ and $\mathrm{B},\mathrm{B}^\star$ to \DiagB.
            We distinguish the case that the robot $b$ does not hop ($\mathrm{A},\mathrm{B}$) and the other case, that it performs a hop ($\mathrm{A}^\star, \mathrm{B}^\star$).
            The result of the case distinction is that in column $(i)$ the number of convex vertices never decreases.
            In column $(ii)$, this is also the case if the white marked cells $s,t,u$ are empty.

            If instead not all of $s,t,u$ are empty, the number of convex vertices might also become smaller.
            But even in this case, still \convex\ progress does not increase:
            We now show that then also $|$outer boundary$|$ becomes smaller as well as the
            maximum value for the total number of convex vertices, so that by definition \convex\ progress
            is not increased, i.e., it still behaves monotonically:
            \begin{figure}[h]
               \begin{center}
                   \includegraphics[width=\textwidth]{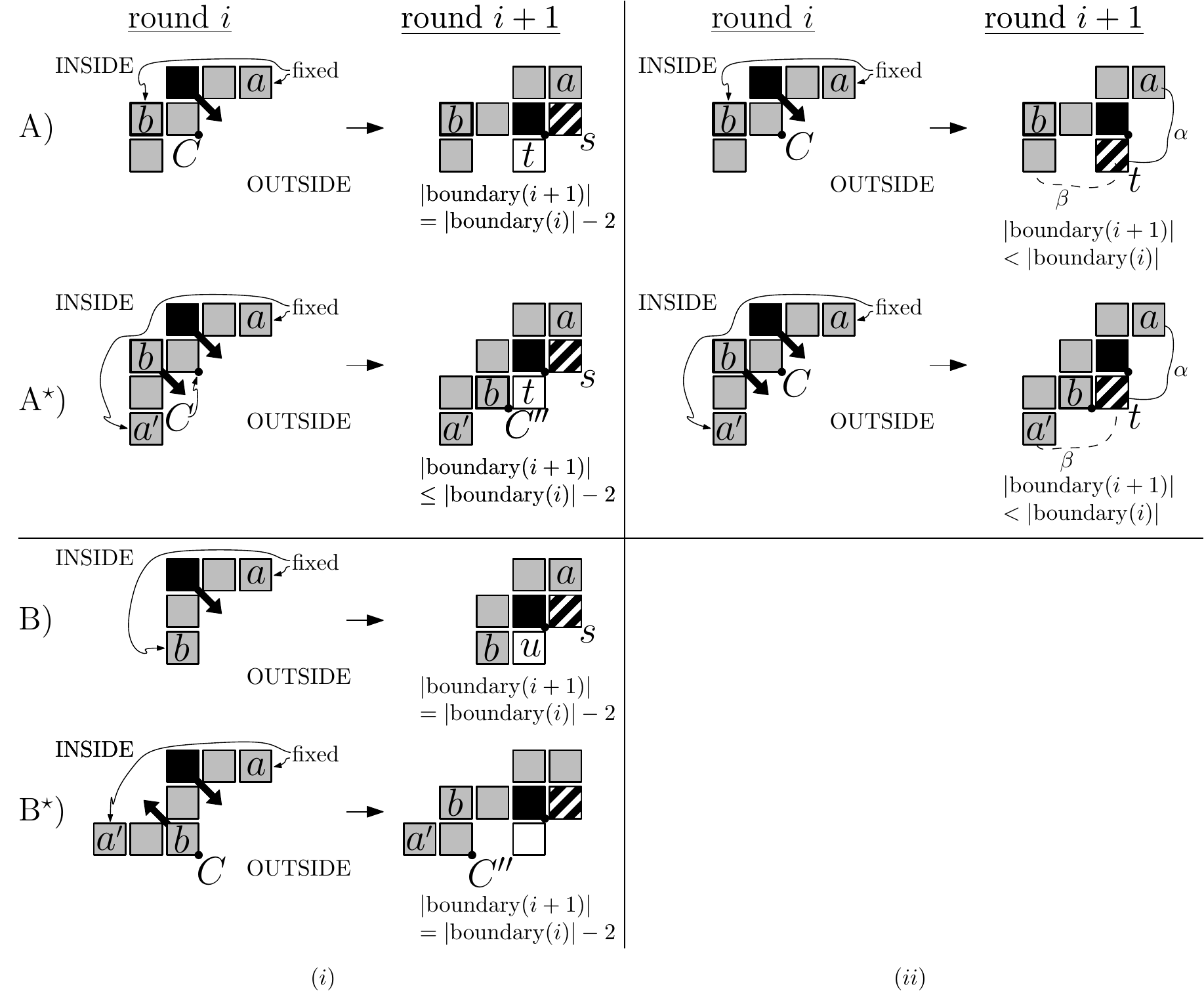}
               \end{center}
               \caption{Diagonal hops can change the outer boundary's length. $C,C',C''$ denote relevant convex vertices. \#boundary$(i)$ denotes the outer boundary length in round $i$.} 
               \label{fig:convexanalysis_boundary}
            \end{figure}
            Figure~\ref{fig:convexanalysis_boundary} shows the relevant cases.
            With reference to \reffig{fig:convexanalysis}.$(ii)$, all hops are performed towards the swarms' outside.
            In column $(i)$, only the cell $s$ contains a robot.
            There we see, that in all cases the outer boundary becomes shorter by at least 2.
            In case of $\mathrm{A}^\star$, this can be even more than $2$ for the case that
            after the hop the cell below $b$ (in \reffig{fig:convexanalysis}.$(ii)$, this was the cell $u$) also contains a robot.
            Column $(ii)$ shows the cases where cell $t$ is (also) occupied.
            Because the swarm is always connected, the hatched robot must be connected to the rest of the swarm.
            This can be either via the subboundary $\alpha$ or $\beta$.
            The hop shortens the outer boundary by forming inner bubbles (\reffig{fig:boundaryfullex_convexvertexdefi}.$(vi)$).
            %
        \end{proof}
    \subsection{Progress measure \area}
        The third progress measure \area\ does not behave monotonically.
        It can be increased during rounds where we get \boundary\ or \convex\ progress.
        But we use it for estimating the number of the remaining rounds (\reflem{cor:area}).
        And in the proof of \refthm{thm:runningtime} we show that the increased amount of the \area\ progress measure
        does not worsen the asymptotic running time.
        %
        \begin{lemma}\label{cor:area}
            If in a step of the gathering process neither \boundary\ nor \convex\ has progress, then instead \area\ has progress by at least $-8$.
        \end{lemma}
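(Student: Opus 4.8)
The plan is to turn the round into a pure collection of diagonal hops and then account for the enclosed–cell count hop by hop. First I would argue that in a round without \boundary\ and \convex\ progress no HV hop can occur: by definition every HV hop lands on an occupied cell and, as already used in the proof of \reflem{lem:convexprogress}, shortens the outer boundary by at least $2$, which would itself be \boundary\ progress. For the same reason the boundary–shortening \DiagAB\ hops identified in that proof are excluded, namely the cases of \reffig{fig:convexanalysis_boundary} in which one of $s,t$ carries a robot and an inner bubble is formed. Hence the only moves left to consider are the \DiagAB\ hops that keep both the length of the outer boundary and the number of convex vertices unchanged, i.e.\ exactly the cases of \reffig{fig:convexanalysis} in which $s,t,u$ are empty. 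I would also note that we may assume at least one such hop is enabled, since an entirely stationary, non–gathered configuration has to be excluded separately.

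Next I would read off the local effect of each surviving hop on \area, which I interpret as the number of outer–boundary robots plus the number of inside cells, i.e.\ essentially the number of cells enclosed by (and on) the outer boundary. The key point to establish is that a length– and convex–preserving diagonal hop cannot slide the boundary neutrally: in each surviving variant of \reffig{fig:convexanalysis} the hopped robot vacates a cell that drops out of the enclosed region while the cell it enters already lay inside or on that region, so the enclosed–cell count strictly decreases. Going through the fixed, constant–size list of patterns (both \DiagA\ and \DiagB, and in both the $b$–hops and the non–$b$–hops variants, for columns $(i)$ and $(ii)$), I would record the sign and magnitude of the enclosed–cell change, and the worst case among them is where the constant $8$ in the statement comes from. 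The delicate sub–point here is that some of the ``outward'', switched–orientation variants of column $(ii)$ look neutral or even area–increasing at first glance, so part of this step is to verify that, once $s,t,u$ are empty, each of them nonetheless strictly lowers the enclosed–cell count.

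Finally the simultaneity must be handled, and this is the step I expect to be the main obstacle, because \area\ is a global quantity whereas every hop is decided locally. Here I would invoke the \emph{collision} analysis of \refssec{ssec:inhibitimpact}: the Inhibit patterns guarantee that two robots hopping in the same round are at distance at least $2$ along the boundary and never hop towards each other, so the constant–size neighbourhoods whose cell counts are altered by two distinct hops are disjoint. Consequently the total change of \area\ in the round is the sum of the independent local changes computed above, each of which is a strict decrease, and the worst–case local accounting of one collision group yields the claimed progress of at least $-8$. The two things that still need care inside this step are to ensure that no single enclosed empty cell is credited to two different hops, and to confirm that the absence of \boundary\ progress really does forbid inner–bubble formation (\reffig{fig:boundaryfullex_convexvertexdefi}.$(vi)$) for every surviving pattern; once these are checked, summing the disjoint local contributions finishes the proof.
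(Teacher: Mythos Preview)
Your local accounting breaks at the central claim that every length-- and convex--preserving diagonal hop strictly decreases \area. This is false. The \DiagAB\ hops at \emph{concave} positions of the outer boundary (column~$(ii)$ of \reffig{fig:convexanalysis} with $s,t,u$ empty) move a robot to a previously empty cell lying \emph{outside} the enclosed region; that cell is now enclosed, so \area\ goes \emph{up} by one per such hop. You note that these variants ``look area--increasing at first glance'' and propose to verify they are not --- but they are. Consequently summing per--hop contributions cannot give $-8$; in fact a configuration can have many outward hops and only slightly more inward ones.

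The paper obtains $-8$ by a global, topological argument rather than a local one. It groups corners into \emph{supercorners} $\calX^\pm,\calY,\calZ^\pm$ so that only the two endpoint robots of a supercorner actually hop, and then uses that a full counter--clockwise walk along the outer boundary accumulates $+360^\circ$: this forces $N_{\calX^+}+N_{\calZ^+}=N_{\calX^-}+N_{\calZ^-}+4$. The hypothesis ``no \convex\ progress'' kills $N_{\calZ^+}$, and since each $\calX^+$ contributes two inward hops and each $\calX^-$ two outward hops, the net is $\Delta\area=-2(N_{\calX^+}-N_{\calX^-})\le-8$. The $8$ is thus four times two, coming from the winding number, not from the worst case of a finite pattern list.

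There is a second, independent gap: your argument implicitly assumes the outer boundary is a simple closed curve on which hops interact only through the Inhibit patterns. Swarms containing \emph{bridges} (hourglass pinches of width $\le 2$, \reffig{fig:bridge_width1n2_ex}) violate this: a bridge can block up to two inward hops of an adjacent subswarm without any Inhibit pattern firing, so a single bridgeless piece may fail to yield $-8$. The paper handles this with a separate lemma: decompose the swarm into bridgeless pieces joined by bridges, build the tree whose nodes are pieces and edges are bridges, and show that the deficit at inner nodes is paid for by the surplus at leaves. Your proposal does not address this case at all.
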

        For the proof of \reflem{cor:area}, see \refsec{sec:areaproofs}.
    \subsection{Total running time}
        Now we can combine all three progress measures \boundary, \convex\ and \area\ for the
        running time proof (\refthm{thm:runningtime}).
        \begin{theorem}\label{thm:runningtime}
            A connected swarm of $n$ robots on a grid
            can be gathered in
            \\
            $\calO(|\mathrm{outer\ boundary}|^2)\subseteq\calO(n^2)$ many rounds.
        \end{theorem}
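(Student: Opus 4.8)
The plan is to combine the three progress measures into a single amortized counting argument. The crucial structural fact, provided by \reflem{lem:boundaryprogress}, \reflem{lem:convexprogress} and \reflem{cor:area} together, is that in \emph{every} round at least one of \boundary, \convex\ or \area\ strictly decreases: if a round yields neither \boundary\ nor \convex\ progress, then by \reflem{cor:area} it yields \area\ progress of at least $8$. Hence I would classify each round into exactly one of three disjoint types: rounds $R_B$ with \boundary\ progress, rounds $R_C$ with \convex\ progress but no \boundary\ progress, and rounds $R_A$ with neither (so with \area\ progress). The total running time is $|R_B|+|R_C|+|R_A|$, and it remains to bound each term.

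First I would bound $|R_B|$ and $|R_C|$ using monotonicity. By \reflem{lem:boundaryprogress}, \boundary\ never increases and is integer-valued, bounded below by the constant attained at a $2\times 2$ configuration; since it starts at $|\mathrm{outer\ boundary}|$, the number of rounds in which it strictly decreases is $\calO(|\mathrm{outer\ boundary}|)$, so $|R_B|=\calO(|\mathrm{outer\ boundary}|)$. By \reflem{lem:convexprogress}, \convex\ is likewise monotonically decreasing and integer-valued; since at all times the number of convex vertices is at most $4|\mathrm{outer\ boundary}|$, \convex\ starts at most at $4|\mathrm{outer\ boundary}|$ and is bounded below by $0$, giving $|R_C|=\calO(|\mathrm{outer\ boundary}|)$.

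The heart of the argument is bounding $|R_A|$, and here \area\ is used as a non-monotone potential. I would first record the isoperimetric-type bound $\area=\calO(|\mathrm{outer\ boundary}|^2)$, valid at every round because the enclosed area of a closed boundary walk of a given length is at most quadratic in that length, and the length never grows. Writing $A_t$ for \area\ in round $t$ and telescoping the per-round changes gives $A_{\mathrm{final}}-A_0=\sum_{t\in R_A}(A_{t+1}-A_t)+\sum_{t\in R_B\cup R_C}(A_{t+1}-A_t)$. On $R_A$ rounds each summand is $\le -8$ by \reflem{cor:area}, so $8\,|R_A|\le A_0-A_{\mathrm{final}}+\sum_{t\in R_B\cup R_C}(A_{t+1}-A_t)_+$, where the last sum collects the possible \emph{increases} of \area\ during the $\calO(|\mathrm{outer\ boundary}|)$ rounds of type $R_B$ and $R_C$. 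Since in one round only boundary robots move and each moving robot alters the count of inside and boundary cells only locally, a single round can raise \area\ by at most $\calO(|\mathrm{outer\ boundary}|)$; over $\calO(|\mathrm{outer\ boundary}|)$ such rounds this contributes $\calO(|\mathrm{outer\ boundary}|^2)$. Together with $A_0=\calO(|\mathrm{outer\ boundary}|^2)$ this yields $|R_A|=\calO(|\mathrm{outer\ boundary}|^2)$, and therefore the total running time is $\calO(|\mathrm{outer\ boundary}|)+\calO(|\mathrm{outer\ boundary}|)+\calO(|\mathrm{outer\ boundary}|^2)=\calO(|\mathrm{outer\ boundary}|^2)\subseteq\calO(n^2)$, since $|\mathrm{outer\ boundary}|=\calO(n)$.

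The main obstacle I expect is making the per-round bound on the increase of \area\ precise: one must argue that when \boundary\ or \convex\ progresses, the newly created inside cells (e.g.\ the inner bubbles of \reffig{fig:boundaryfullex_convexvertexdefi}.$(vi)$) cannot inflate \area\ by more than a controlled amount per round. If a clean per-round bound proves awkward, the fallback is to charge the total \area\ increase directly against the initial value $A_0=\calO(|\mathrm{outer\ boundary}|^2)$ --- exactly the paper's claim that the increase worsens the running time only by a constant factor --- so that the telescoping inequality still delivers $|R_A|=\calO(|\mathrm{outer\ boundary}|^2)$.
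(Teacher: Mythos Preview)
Your overall architecture is exactly the paper's: partition rounds into \boundary-progress, \convex-progress, and \area-progress types; bound the first two linearly by monotonicity; and bound the third by a telescoping/potential argument that controls the total \area\ increase accumulated during the first two types of rounds.

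The one place where your argument diverges from the paper and is not quite right is the per-round bound on the \area\ increase. Your justification ``each moving robot alters the count of inside and boundary cells only locally'' is valid on $R_C$ rounds (the boundary length is unchanged, no bubbles form, so each of the $\calO(B)$ hops changes \area\ by $\calO(1)$), and this is precisely what the paper uses there. It fails on $R_B$ rounds: a single hop can close off an indentation and create an inner bubble (\reffig{fig:boundaryfullex_convexvertexdefi}.$(vi)$), converting $\Theta(\ell^2)$ formerly exterior cells into inside cells in one step, where $\ell$ is the amount by which the outer boundary shortened in that round. Since $\ell$ can be $\Theta(B)$, the per-round increase is not $\calO(B)$ in general. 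The paper's fix is to bound the increase in an $R_B$ round with boundary drop $\ell$ by $\ell^2\le \ell B$ and then sum: $\sum_{t\in R_B}\Delta A_t \le B\sum_{t\in R_B}\ell_t \le B\cdot B = B^2$, using monotonicity of \boundary. Your stated fallback (``charge directly against $A_0$'') does not by itself yield this; you need the link between $\Delta A$ and the boundary drop $\ell$. With that correction your telescoping argument goes through and matches the paper's $B+4B+6B^2$ bound.
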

        \begin{proof}
            Let $B$ be the initial length of the swarm's outer boundary.
            We know from \reflem{lem:boundaryprogress} that \boundary\ decreases monotonously.
            Then, progress in \boundary\ happens at most $B$ times.
            By \reflem{lem:boundaryprogress}, \convex\ also decreases monotonously.
            As every robot on the swarm's outer boundary can provide at most 4 convex vertices,
            \convex\ progress happens at most $4B$ times.

            We estimate the rounds without \boundary\ and \convex\ progress via the size of the included area,
            i.e., the \area\ progress.
            By \reflem{cor:area}, we know that in every round without \boundary\ and \convex\ progress, the area becomes smaller by at least $-8$.
            But, \area\ is not a monotone progress measure,
            in rounds with \boundary\ or \convex\ progress, the included area can increase:
            While HV hops cannot increase the included area, \DiagAB\ hops can.
            First, we assume that the according \DiagAB\ hops do not change the outer boundary length.
            Then, every time \convex\ has progress, the area can become larger by at most $B$, because
            every robot hop on the outer boundary can increase the area by at most 1.
            As \convex\ happens at most $4B$ times, this in total is upper bounded by $4B^2$.

            If the outer boundary length changes, i.e., becomes shorter, then the included area can increase (cf.\ proof of \reflem{lem:convexprogress} and \reffig{fig:boundaryfullex_convexvertexdefi}.$(vi)$).
            Then, a \boundary\ progress by $\ell$ can also increase the included area by $\Delta A\leq\ell^2\leq\ell B$.
            But as \boundary\ progress is monotonically decreasing, the sum of all these $\Delta A$ is upper bounded by $B^2$.

            Summing it up, during the whole process of the gathering, the area can be increased by at most $(4+1)B^2=5B^2$.
            Together with the initial area of at most $B^2$, \area\ progress happens at most $6B^2$.
            Then, the gathering is done after at most $B+4B+6B^2$ rounds.
        \end{proof}
\section{Proof of \reflem{cor:area}}\label{sec:areaproofs}
    In this section, we provide the proof of \reflem{cor:area}.
        In the following, we will prove \reflem{lem:area_nobridges} and \reflem{lem:area_bridges} whose combination immediately deliver the proof of \reflem{cor:area}.
        \reflem{lem:area_nobridges} and \ref{lem:area_bridges} are given later, when we have prepared for their proofs.
        \subsection{Outline of the proof}
            In \reflem{lem:area_nobridges}, we assume that the swarm does not contain hourglass shaped parts, while
            in \reflem{lem:area_bridges} the result is generalized to swarms that do contain  hourglass shaped parts.
            In \reffig{fig:treetransformation}, hourglass shaped parts are marked by black robots.
            We call them \emph{bridges}.
            A more detailed definition is given before \reflem{lem:area_nobridges}.

            For the proofs, we will first construct a more abstract model of the swarm's outer boundary:
            During a walk along the swarm's outer boundary one performs rotations by $0^\circ$ or $\pm 90^\circ$.
            We subdivide the outer boundary into parts (so-called \emph{supercorners}) such that at both of their endpoints
            a \DiagAB\ hop can be performed and that during a walk along such a supercorner, a total rotation by $0^\circ$ or $\pm 90^\circ$
            is performed.
            We will argue that because always a total rotation by $+360^\circ$ is performed during a complete counter-clockwise
            walk along the swarm's outer boundary, there must be 4 more $+90^\circ$ than $-90^\circ$ supercorners.
            In our construction, robot hops of $+90^\circ$ supercorners are always performed towards the swarm's inside.
            This then proves the statement of \reflem{lem:area_nobridges} that \area\ decreases by at least $-8$.

            When generalizing this to \reflem{lem:area_bridges}, we first notice that bridges can hinder hops towards the swarm's inside.
            But we will argue, that a single bridge cannot hinter the total $-8$ \area\ decrease that is proven by \reflem{lem:area_nobridges},
            and that at least two bridges are needed for this.
            We subdivide the swarm into bridges and subswarms that are connected by bridges, as shown in \reffig{fig:treetransformation}.
            Then, we construct a tree such that every node represents a bridgeless subswarm and every edge represents a bridge.
            The leaves of this tree represent subswarms that only contain one bridge.
            The proof of \reflem{lem:area_bridges} then mainly follows by the argument that every tree contains enough leaves for compensating
            the non-decreasing \area\ progress of the subswarms that are represented by its inner nodes.

        \subsection{Preparing for the proofs}
            
            Our arguments mainly consider only robots on the swarm's outer boundary.
            But in general, a swarm also contains inside robots.
            The proofs of \reflem{lem:area_nobridges} and \ref{lem:area_bridges} require that inside robots neither inhibit
            outer boundary robots from hopping towards the swarm's inside (\area\ decrease) nor enable additional robot hops towards the swarm's
            outside (\area\ increase).
            We formally prove this in \reflem{lem:boundarycorrect}.
            \begin{lemma}\label{lem:boundarycorrect}
                Let $\partial S$ be the outer boundary of some swarm $S$ and $S^\circ$ be its interior robots.
                We consider \DiagAB\ hops:
                \begin{enumerate}
                    \item If $r\in\partial S$ hops towards the swarm's inside, it also does this on $S$.\label{enum:bc_convexhop}
                    \item If $r\in\partial S$ does not hop towards the swarm's outside, it also does not hop on $S$.\label{enum:bc_concavehop}
                \end{enumerate}
            \end{lemma}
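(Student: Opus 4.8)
The plan is to reduce the claim to a statement about cell occupancies and then verify it by inspecting the Hop and Inhibit patterns. First I would observe that passing from the reduced configuration in which only the outer boundary $\partial S$ is present to the full swarm $S$ amounts to occupying exactly the cells of $S^\circ$, all of which lie strictly inside the outer boundary: every cell lying on $\partial S$ or outside it keeps its occupancy, and inner-bubble cells stay empty. Since a robot $r\in\partial S$ decides its \DiagAB\ move solely by matching the (bounded) Hop and Inhibit patterns against the occupancies in a bounded neighborhood of $r$ within its viewing range, it suffices to control how filling cells of $S^\circ$ can change which patterns match at $r$.

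Next I would fix local coordinates at $r$ aligned with the outer boundary, so that within the bounded window the patterns constrain, the boundary locally separates an \emph{inside} from an \emph{outside} and the orientation of the matched pattern fixes the hop direction relative to them. The key structural facts to extract from \reffig{fig:hops_diag} and \reffig{fig:hops_diag_inhibit}, checked over all rotations and mirrorings, are: in the orientation realizing an \emph{inward} hop, every cell a matching \DiagAB\ pattern requires to be \emph{empty} lies on the boundary or outside it (the single inside cell directly involved, namely the hop's target, is permitted to be occupied, the hop then being resolved by a merge), and every cell a relevant Inhibit pattern requires to be \emph{occupied} also lies on the boundary. By the inside/outside symmetry displayed in \reffig{fig:convexanalysis} (where column $(ii)$ is column $(i)$ with inside and outside swapped), the mirror statement holds for the \emph{outward} orientation: its \DiagAB\ empty-cells lie inside, and the empty-cells of the relevant Inhibit patterns lie on the boundary or outside.

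With these facts the two parts follow by monotonicity. For part~\ref{enum:bc_convexhop}, suppose the inward hop fires on $\partial S$. Filling cells of $S^\circ$ occupies only inside cells, hence cannot void any of the \DiagAB\ pattern's empty-requirements (which sit on the boundary or outside); if the target cell becomes occupied the hop still proceeds inward and the robots merge, which still realizes a move toward the inside. Moreover, filling inside cells cannot complete any blocking Inhibit pattern, because the occupied-requirements of those patterns lie on the boundary. Hence the inward hop also fires on $S$. For part~\ref{enum:bc_concavehop} I argue the contrapositive: if an outward hop fires on $S$, its \DiagAB\ empty-requirements (now inside) are satisfied on $S$ and therefore a fortiori on $\partial S$, so the pattern already matches on $\partial S$; and since the empty-requirements of the relevant Inhibit patterns sit on the boundary or outside, deleting the interior robots of $S^\circ$ can neither newly satisfy an Inhibit empty-requirement nor create an occupied-requirement, so no blocking Inhibit appears on $\partial S$ either. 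Thus the outward hop already fires on $\partial S$, which is exactly the contrapositive of the claim.

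The main obstacle is the finite but delicate case analysis underlying the structural facts: for each of \DiagA, the two sub-cases of \DiagB, and each associated Inhibit pattern, under every admissible rotation and mirroring, one must classify every constrained cell as inside, boundary, or outside and check that the monotonicity runs in the required direction. The two subtle points are the treatment of the target cell (an occupied target does not destroy an inward hop, only turns it into a merge) and verifying that the upper-right and lower-left Inhibit regions genuinely lie against the boundary, so that interior robots can neither create a blocking Inhibit in part~\ref{enum:bc_convexhop} nor remove one in part~\ref{enum:bc_concavehop}.
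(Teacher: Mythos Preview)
Your overall plan---reduce to a monotonicity statement about which constrained cells of the Hop and Inhibit patterns lie inside, on the boundary, or outside---is essentially the same organizing idea as the paper's proof. However, one of the ``structural facts'' you rely on is false as stated, and this breaks the argument for part~\ref{enum:bc_concavehop}.

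The problematic claim is that, in the outward orientation, ``the empty-cells of the relevant Inhibit patterns lie on the boundary or outside,'' so that removing interior robots cannot newly satisfy an Inhibit pattern. In fact some empty-cells of the Inhibit patterns can be interior cells: the paper's own case analysis (its Figure~\ref{fig:innerbotsproof}.$(E),(F)$) shows that occupying an interior cell adjacent to the colliding robot destroys the matching Inhibit~1 (respectively Inhibit~2) pattern. What saves the lemma is not monotonicity of a single pattern but a \emph{transformation} argument: adding the interior robot turns Inhibit~1 into Inhibit~2 or Inhibit~3, and Inhibit~2 into Inhibit~3, so \emph{some} Inhibit pattern still matches. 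Your contrapositive needs precisely this: if no Inhibit matches on $S$, none matches on $\partial S$; equivalently, any Inhibit present on $\partial S$ survives (possibly as a different Inhibit) on $S$. That requires the pattern-to-pattern case check, not the blanket empty-cell location claim.

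A second point your monotonicity framing hides is the asymmetry between \DiagA\ and \DiagB: a \DiagA\ hop is blocked by one Inhibit, a \DiagB\ hop needs two. Adding interior robots could in principle flip the matched Hop pattern from \DiagA\ to \DiagB, which would weaken the blocking condition and potentially enable an outward hop even though the original Inhibit persists. The paper explicitly rules this transition out (it would require removing a boundary robot), but your argument, which only tracks empty/occupied cell locations and not the identity of the matching Hop pattern, does not address it. Finally, your remark that an occupied inward target ``only turns it into a merge'' is not how the algorithm is specified: if the target cell is constrained empty by the Hop pattern, occupying it destroys the match and the robot does not hop at all; you would need to verify that the target cell is unconstrained in the inward orientation, which the paper handles implicitly by noting the empty-cells lie on the outside.
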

            Roughly speaking, in the proof we analyze how \DiagAB\ and Inhibit patterns can be generated or destroyed by adding robots
            and prove that in situations in which this would lead to a contradiction of the lemma, this cannot be done by adding robots
            only to the swarm's interior.
            For the formal proof, see \refssec{ssec:proofboundarycorrect} (appendix).

            For the \area\ progress proofs, we need a more formal description of the swarm's outer boundary.
            \paragraph{{\bfseries Quasi lines.}}
                \begin{definition}[quasi line]\label{def:quasiline}
                    We define a subboundary, called a horizontal quasi line, as follows:
                    Cf.\ \reffig{fig:quasilinedef}.
                    \begin{enumerate}
                        \item It consists only of horizontal subboundaries of length $\geq 3$ that are
                        connected by \emph{stairways} of height $1$ or $2$. (Definition of stairways, see below.)
                        \item It begins and ends with three horizontally aligned robots.
                    \end{enumerate}
                \end{definition}
                In \reffig{fig:quasilinedef}, \emph{stairways} are marked by black robots.
                They are alternating left and right turns.
                In the figure, their endpoints are marked by bicolored robots.

                Stairways that do not match the quasi line Definition~\ref{def:quasiline}, i.e., are of height $>2$, connect
                two neighboring quasi lines that can be either both horizontal, both vertical, or one horizontal and the other one vertical.
                Our algorithm performs \DiagAB\ hops at these connection points.
                In \reffig{fig:quasilinedef}, dashed lines border the according patterns
                for \emph{quasi line} 1.
                \begin{figure*}[h]
                   \begin{center}
                       \includegraphics[width=\textwidth]{}
                   \end{center}
                   \caption{Example: Quasi lines and stairways and an abstract representation of the shown boundary.}
                   \label{fig:quasilinedef}
                \end{figure*}

                In our proofs, we represent a horizontal quasi line by a horizontal line segment and a stairway
                by a diagonal one.
                For the example in \reffig{fig:quasilinedef}, the drawn polygon shows this construction for the given example swarm.
            \paragraph{{\bfseries Corners.}}
                We call the connections between different quasi lines \emph{corners}.
                These are the fat drawn parts of the polygon.
                At their endpoints, the \DiagAB\ Hop patterns match.
                The formal definition of \emph{corners} is the following (cf.\ \reffig{fig:cornerdef_ls}).
                \begin{figure}[h]
                   \begin{center}
                       \includegraphics[width=0.6\textwidth]{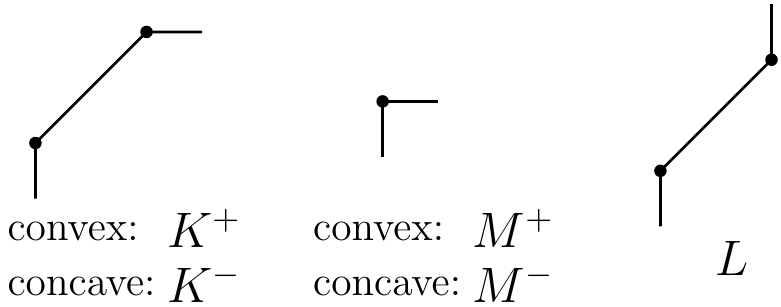}
                   \end{center}
                   \caption{Definition of \emph{corners}.}
                   \label{fig:cornerdef_ls}
                \end{figure}
                \begin{tabbing}
                    $M^+$ \=\kill
                    $K^+$ \> Corner includes a diagonal line segment and is convex (i.e., induces a $+90^\circ$ rotation).\\
                    $K^-$ \> Corner includes a diagonal line segment and is concave (i.e., induces a $-90^\circ$ rotation).\\
                    $L$ \> Corner includes a diagonal line segment and one end is convex and the other one concave (i.e.,\\\> induces $0$ rotation).\\
                    $M^+$ \> Corner does not include a diagonal line segment and is convex (i.e., induces a $+90^\circ$ rotation).\\
                    $M^-$ \> Corner does not include a diagonal line segment and is concave (i.e., induces a $-90^\circ$ rotation).
                \end{tabbing}
            \paragraph{{\bfseries Supercorners.}}
                In our strategy, though the visible pattern of some robot matches one of \DiagAB, it does not execute its hop
                if suitable inhibition patterns match.
                Remember the difference, that if for a robot a \DiagA\ pattern matches, only one matching inhibition pattern inhibits
                its hop, but for \DiagB\ two (one in each direction) inhibition patterns are required.
                \begin{figure}[h]
                   \begin{center}
                       \includegraphics[width=0.8\textwidth]{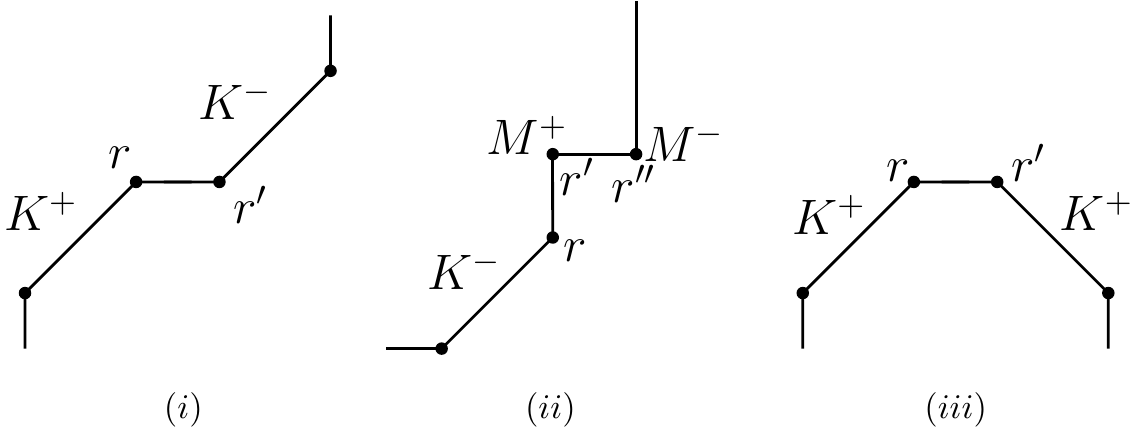}
                   \end{center}
                   \caption{Collisions in corner representation. (Significant examples)}
                   \label{fig:corner_ls_collide_ex}
                \end{figure}
                \reffig{fig:corner_ls_collide_ex} shows significant examples for this:
                $(i)$: $r,r'$ do not hop.
                $(ii)$: $r,r'$ do not hop, but the robot $r''$ does, because a colliding robot exists only in one direction on the boundary
                     (\DiagB\ pattern).
                $(iii)$: $r,'r$ execute their hops, because they both hop downwards (no collision).

                Now we construct \emph{supercorners} $\calX,\calY,\calZ$ from consecutive $K^+,$ $K^-,$ $L,$ $M^+,$ $M^-$ corners,
                in such a way that the supercorners have the property that
                only the two (respectively the one for supercorners that only consist of a single $M^\pm$ corner) endpoint
                robots actually perform their hops
                and the hops of all other included $K^+,K^-,L,M^+,M^-$ endpoint robots are inhibited by collisions with neighbors.
                Because, in order to collide, the hops must be performed in opposite directions, the consecutive corners $K,M$ along
                $\calX,\calY,\calZ$ must be alternating convex and concave.
                This means that we can define the supercorners analogously to corners (cf.~\reffig{fig:supercorner_ls_def}).
                Here, $n_{K^+},n_{M^+}$ denotes the number of $K^+$, respectively $M^+$ corners of the according supercorner and
                $n_{K^-},n_{M^-}$ denotes the same but for $K^-$ and $M^-$, respectively.
                \begin{tabbing}
                    $\calX^+$\=: \=\kill
                    $\calX^+$\>: \> $n_{K^+} + n_{M^+} = n_{K^-} + n_{M^-} +1$ \= (total $+90^\circ$ rotation)\\
                    $\calX^-$\>: \> $n_{K^+} + n_{M^+} = n_{K^-} + n_{M^-} -1$ \> (total $-90^\circ$ rotation)\\
                    $\calY$\>: \> $n_{K^+} + n_{M^+} = n_{K^-} + n_{M^-}$ \> (no total rotation)\\
                    $\calZ^+$\>: \> $n_{M^+} = 1; n_{K^+},n_{K^-},n_{M^-} =0$ \> ($+90^\circ$ rotation)\\
                    $\calZ^-$\>: \> $n_{M^-} = 1; n_{K^+},n_{K^-},n_{M^+} =0$ \> ($-90^\circ$ rotation)
                \end{tabbing}
                \begin{figure}[h]
                   \begin{center}
                       \includegraphics[width=0.45\textwidth]{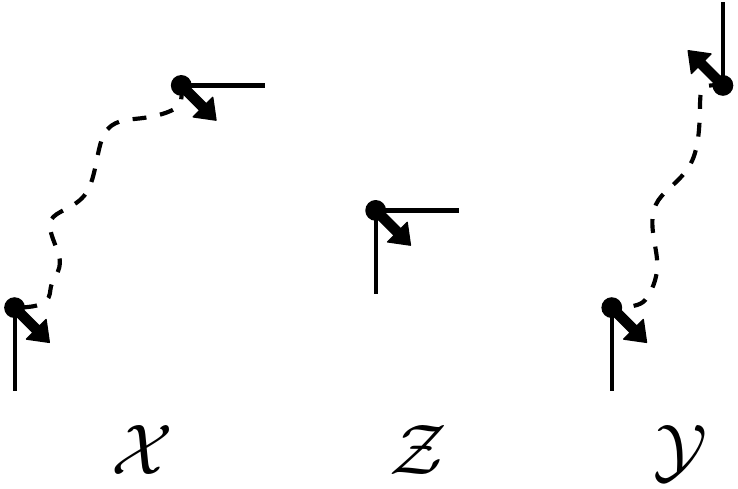}
                   \end{center}
                   \caption{Definition of \emph{supercorners} $\calX,\calY,\calZ$.}
                   \label{fig:supercorner_ls_def}
                \end{figure}
        \subsection{\emph{Bridges}}
            In \reflem{lem:area_nobridges} and \ref{lem:area_bridges}, we distinguish two kinds of swarms:
            the ones without (\reflem{lem:area_nobridges})  and the ones with \emph{bridges} (\reflem{lem:area_bridges}).
            \begin{figure}[h]
               \begin{center}
                   \includegraphics[width=0.8\textwidth]{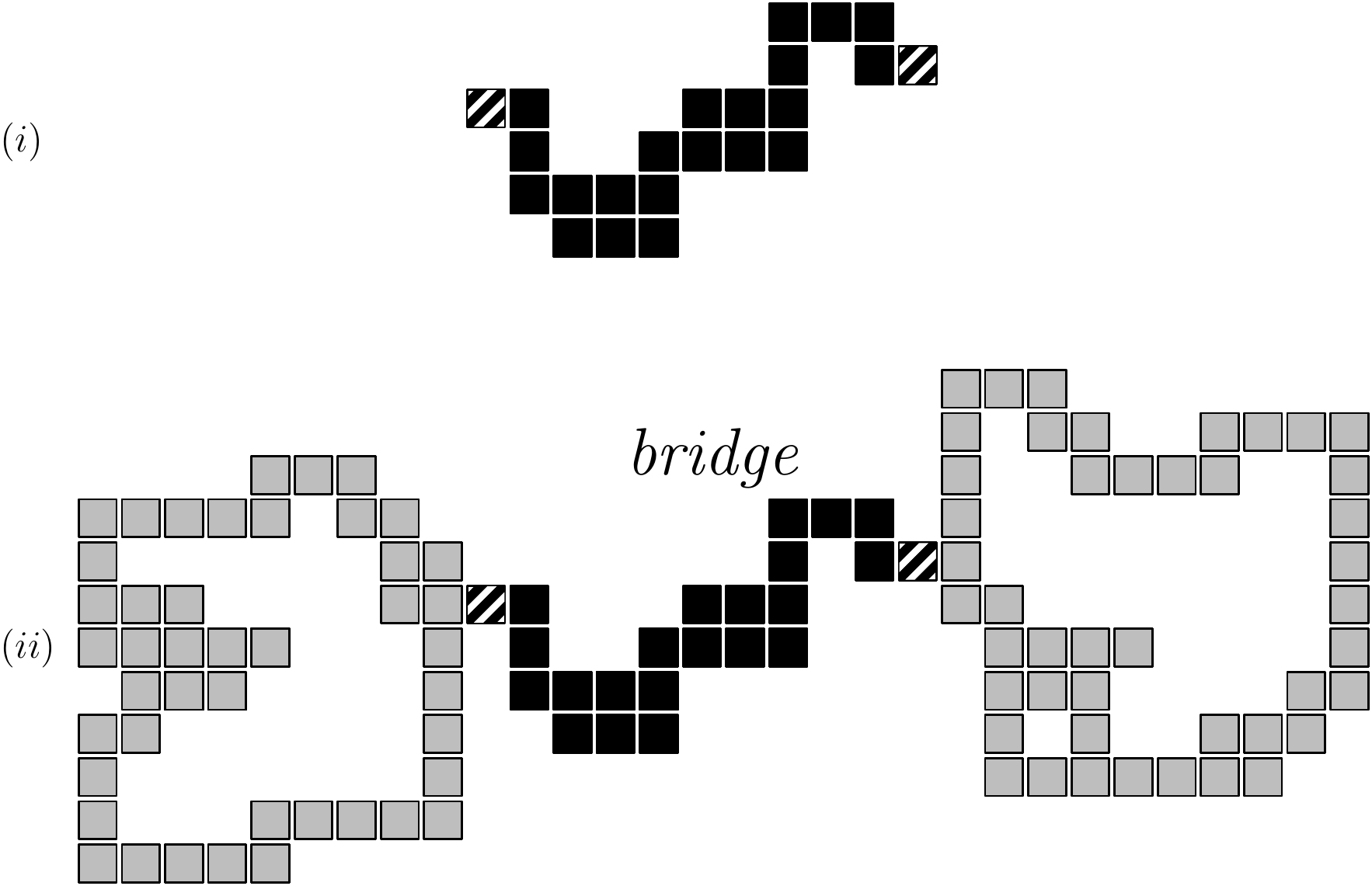}
               \end{center}
               \caption{(i): The black subswarm can be completely removed by executing a sequence of HV hops from its two endpoints (hatched). (ii): The \emph{bridge} connects two subswarms.}
               \label{fig:bridge_width1n2_ex}
            \end{figure}
            For the definition of \emph{bridges}, we start with a swarm (cf.\ \reffig{fig:bridge_width1n2_ex}.$(i)$)
            which has the property, that it contains exactly two endpoints (hatched) from which the whole swarm can be completely removed just
            by continuously executing HV hops.
            \reffig{fig:bridge_width1n2_ex}.$(ii)$: We call such a subswarm a \emph{bridge} if it connects two other subswarms
            (grey) at its endpoints.
        \subsection{\area\ progress for swarms without bridges (\reflem{lem:area_nobridges}).}
            \begin{lemma}
                \label{lem:area_nobridges}
                If a swarm $\calS$ does not contain any bridges and neither \boundary\ nor \convex\ has progress, then instead \area\ has progress.
            \end{lemma}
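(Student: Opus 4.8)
The plan is to reduce the assertion to a global rotation count on the outer boundary of $\calS$, carried out on the supercorner representation above. First I would determine which hops are admissible under the hypothesis. Every HV hop shortens the outer boundary by at least $2$ (cf.\ the proof of \reflem{lem:convexprogress}), and every \DiagAB\ hop that changes the boundary length produces \boundary\ progress; hence ``no \boundary\ progress'' forces all hops of this round to be boundary-length-preserving diagonal hops, and ``no \convex\ progress'' restricts them to the neutral situations of column~$(i)$ of \reffig{fig:convexanalysis}. By \reflem{lem:boundarycorrect}, interior robots neither block a boundary robot's inward hop (part~\ref{enum:bc_convexhop}) nor enable an extra outward hop (part~\ref{enum:bc_concavehop}), so it suffices to analyse the cyclic sequence of outer-boundary robots, i.e.\ its decomposition into supercorners $\calX^\pm,\calY,\calZ^\pm$. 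I would first check that within every supercorner all non-endpoint corner hops are inhibited by the collisions of \refssec{ssec:inhibitimpact}, so that exactly the endpoints hop.

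\textbf{Rotation count.} A complete counter-clockwise walk along the outer boundary turns by exactly $+360^\circ$. Summing the per-supercorner rotations from \reffig{fig:supercorner_ls_def}, where $\calX^+,\calZ^+$ contribute $+90^\circ$, $\calX^-,\calZ^-$ contribute $-90^\circ$, and $\calY$ contributes $0^\circ$, gives
\[
  90^\circ\bigl(n_{\calX^+}+n_{\calZ^+}\bigr)-90^\circ\bigl(n_{\calX^-}+n_{\calZ^-}\bigr)=360^\circ ,
\]
so that there are exactly four more $+90^\circ$ than $-90^\circ$ supercorners.

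\textbf{Area bookkeeping.} A single diagonal hop changes \area\ by at most $1$, and by construction only supercorner endpoints hop. I would then show, from the local effect of the hops in \reffig{fig:convexanalysis} together with \reflem{lem:boundarycorrect}, that the endpoint hop(s) of a $+90^\circ$ supercorner are directed towards the interior and strictly decrease \area, that a $-90^\circ$ supercorner can increase \area\ by at most a matching amount, and that a $\calY$ supercorner is net non-increasing. Since $\calS$ contains no bridges, the region into which a convex supercorner hops really is the swarm's interior and offers room for the inward move, so these decreases are genuine. Pairing each $-90^\circ$ supercorner with a $+90^\circ$ one cancels their contributions, and the four-fold excess of inward-hopping convex supercorners supplied by the rotation count remains uncompensated; tracking the exact amounts yields the total \area\ decrease of at least $8$ asserted for \reflem{cor:area}.

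\textbf{Main obstacle.} The crux is the local bookkeeping of the last step: proving uniformly over all five supercorner types that the sign of the induced \area\ change is opposite to the supercorner's rotation, and that the convex hops land in the genuine interior rather than in a pinched part of the swarm. This is precisely where the no-bridge hypothesis is indispensable --- a single thin hourglass part would let a nominally inward hop run along the bridge without reducing the area, which is exactly the difficulty deferred to \reflem{lem:area_bridges} --- and where \reflem{lem:boundarycorrect} is needed to discharge the interior robots. Establishing the supercorner decomposition (the collision argument of \refssec{ssec:inhibitimpact}) is a necessary but more routine preliminary.
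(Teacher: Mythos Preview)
Your outline follows the paper's approach --- the supercorner decomposition plus the global $+360^\circ$ rotation count --- and is essentially correct in spirit. Two points, however, are handled more sharply in the paper and are what actually nail down the $-8$.

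First, the paper does not merely restrict the \emph{hops} to convexity-neutral ones; it uses ``no \convex\ progress'' to eliminate an entire supercorner \emph{type}: $N_{\calZ^+}=0$. A $\calZ^+$ is a single $M^+$ corner forming its own supercorner, so its endpoint hop is not blocked by collisions, and by \reflem{lem:boundarycorrect}\,(\ref{enum:bc_convexhop}) interior robots cannot block it either; hence it would hop and produce \convex\ progress, a contradiction. You gesture at this (``neutral situations of column~$(i)$''), but never translate it into the structural statement $N_{\calZ^+}=0$, which is what turns the rotation identity into $N_{\calX^+}=N_{\calX^-}+N_{\calZ^-}+4$.

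Second, your ``pairing'' is too symmetric. The paper's bookkeeping is exact and asymmetric: each $\calX^+$ contributes $-2$ (two inward hops), each $\calX^-$ contributes $+2$, and each $\calZ^-$ contributes $0$ --- its single outward hop \emph{is} blocked by interior robots (this is consistent with \reflem{lem:boundarycorrect}\,(\ref{enum:bc_concavehop}), which forbids interior robots only from \emph{enabling} outward hops, not from preventing them). Plugging these into the identity gives $\Delta\area=-2(N_{\calX^+}-N_{\calX^-})=-2(N_{\calZ^-}+4)\le -8$. Your formulation ``a $-90^\circ$ supercorner can increase \area\ by at most a matching amount'' blurs exactly the $\calZ^-$-vs-$\calX^-$ asymmetry that yields the extra $-2N_{\calZ^-}$ slack and the clean $-8$.

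Finally, your reading of where the no-bridge hypothesis enters is slightly off. In the paper it is used implicitly to guarantee that supercorner endpoint hops are not hindered (bridges can occupy the white cells of a \DiagAB\ pattern or create extra collisions, cf.\ the analysis in \reflem{lem:area_bridges}); it is not about whether an inward hop ``lands in the genuine interior''.
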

            \begin{proof}
                During a walk along the outer boundary of the swarm, we perform a total rotation of $360^\circ$.
                Using the above construction, it follows:
                $$ N_{K^+}+N_{M^+}=N_{K^-}+N_{M^-}+4\mathrm{,}$$
                while $N_{K^+},N_{M^+},N_{K^-},N_{M^-}$ are the total numbers of $K^+,$ $M^+,$ $K^-$ and $M^-$, respectively.
                Then, there must exist neighboring corners without collisions.
                Then, by construction, the above equation also holds for supercorners:
                $$N_{\calX^+}+N_{\calZ^+}=N_{\calX^-}+N_{\calZ^-}+4\mathrm{.}$$
                By the prerequisites of the lemma, there is no \convex\ prog\-ress.
                So, $N_{\calZ^+}\stackrel{!}{=}0$.
                Note that $N_{\calZ^-}$ can be bigger than zero if the according hop is hindered by inside robots.
                We get:
                $$N_{\calX^+}=N_{\calX^-}+N_{\calZ^-}+4\mathrm{.}$$
                By construction, every $\calX^+$ executes two hops towards the swarm's inside, while $\calX^-$ does the
                same towards the outside.
                Then, the included area behaves as follows:
                $$\Delta\mathrm{\area}=-2(N_{\calX^+}-N_{\calX^-})=-2(N_{\calZ^-}+4)\leq-8\mathrm{.}$$
                This finishes the proof.
            \end{proof}
        \subsection{\area\ progress for swarms with bridges (\reflem{lem:area_bridges}).}
        Now we generalize to a swarm that includes \emph{bridges}.
            \begin{lemma}
                \label{lem:area_bridges}
                If a swarm $\calS$ contains $k>0$ bridges and neither \boundary\ nor \convex\ has progress, then instead \area\ has progress.
            \end{lemma}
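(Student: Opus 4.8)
The plan is to reduce the bridged case to the bridgeless one by cutting $\calS$ along its bridges and then re-assembling the local \area\ contributions with a tree argument. First I would decompose $\calS$ into its $k+1$ maximal bridgeless subswarms, joined by the $k$ bridges, and record this structure as the tree $T$ whose nodes are the subswarms and whose edges are the bridges (cf.\ \reffig{fig:treetransformation} and \reffig{fig:bridge_width1n2_ex}); then $T$ has $k$ edges and $k+1$ nodes, and its leaves are exactly the subswarms incident to a single bridge. Since no HV hop can occur in a round without \boundary\ progress (such a hop shortens the boundary by at least $2$), the bridges are frozen and every area change is caused by \DiagAB\ hops at the subswarms' (super)corners; this lets me attribute $\Delta A$ additively to the subswarms. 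Conceptually, if a subswarm is \emph{capped} at each of its bridges (replacing the outgoing bridge excursion by a short straight subboundary), it becomes a closed outer boundary on which the supercorner construction of \reflem{lem:area_nobridges} applies verbatim: the total rotation of $+360^\circ$ forces four excess convex supercorners, whose hops point towards that subswarm's inside and would contribute $\Delta A\leq-8$.

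Next I would quantify how a bridge perturbs this ideal $-8$, and this is the step I expect to be the main obstacle. The decisive local claim is that \emph{a single bridge cannot neutralise the $-8$}: a leaf subswarm still satisfies $\Delta A_{S_i}\leq-8$. Here one must argue directly on the supercorner and inhibit structure at a bridge junction. Walking the outer boundary counter-clockwise, each attachment of a bridge to a subswarm contributes a pair of concave ($-90^\circ$) junction corners, and one has to check, using \reflem{lem:boundarycorrect} to neutralise the influence of interior robots, (a) whether the convex inside-hops of the four excess supercorners are blocked or merely redirected by the bridge, and (b) whether the concave junction corners perform \area-increasing outside-hops or are instead turned into $\calZ^-$ supercorners whose hops are inhibited by collisions. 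The claim is that these effects cancel so that one bridge leaves the four inside-hops intact, whereas a \emph{second} bridge is required before the four excess convex corners can be cancelled; equivalently, each bridge beyond the first at a subswarm perturbs its contribution by at most a constant $c\leq8$, giving
$$\Delta A_{S_i}\leq-8+c\,(\deg_T(S_i)-1)\mathrm{,}$$
where $\deg_T(S_i)-1=0$ on the leaves.

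Finally I would aggregate over the tree. Summing the per-subswarm bound and using the elementary degree identity $\sum_i(\deg_T(S_i)-1)=2k-(k+1)=k-1$ (to which the leaves contribute $0$) yields, for $c\leq8$ and $k\geq1$,
$$\Delta A=\sum_i\Delta A_{S_i}\leq-8(k+1)+c\sum_i(\deg_T(S_i)-1)=-8(k+1)+c(k-1)\leq-16\leq-8\mathrm{.}$$
This is exactly the promised statement that the tree has enough leaves to compensate the non-decreasing \area\ progress of its inner nodes: the $k-1$ units of excess degree sitting on the branching inner nodes are always outweighed by the guaranteed $-8$ of the $k+1$ subswarms. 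Together with \reflem{lem:area_nobridges} this establishes \reflem{cor:area}. The delicate part throughout is the junction analysis in the second paragraph; the tree counting is then purely combinatorial, and the only quantitative input it needs is the bound $c\leq8$ on the per-bridge perturbation, which one extracts from the same case distinction.
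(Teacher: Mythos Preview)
Your overall architecture matches the paper's: decompose along bridges, bound the \area\ change per subswarm in terms of its degree in the bridge tree, and then sum over the tree. The discrepancy is in the junction analysis, and it is not cosmetic.

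You assert that ``a leaf subswarm still satisfies $\Delta A_{S_i}\leq-8$'' and parametrise the per-subswarm bound as $-8+c\,(\deg_T(S_i)-1)$, i.e.\ the \emph{first} bridge at a subswarm is free. This is false. After capping, a single bridge can sit exactly at the junction of two $\calZ^+$ supercorners of the capped subswarm and block \emph{both} of their inside hops (this is precisely the configuration in \reffig{fig:hindercases}.$(i)$); in the original swarm these two blocked $\calZ^+$ corners do not hop, so no \convex\ progress is created and the hypothesis of the lemma is respected. Hence a leaf only guarantees $\Delta A_{S_i}\leq-4$, and more generally each bridge costs $+4$, giving the paper's bound $\Delta A_{S_i}\leq-8+4\,\deg_T(S_i)$ rather than your $-8+c\,(\deg_T(S_i)-1)$. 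Your cancellation heuristic (``the concave junction corners become $\calZ^-$ and the effects cancel'') does not survive this example: the two blocked corners are convex $\calZ^+$, not concave, and nothing on the bridge side compensates.

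With the corrected per-node bound the tree summation is even simpler than yours: using $\sum_i\deg_T(S_i)=2k$ one gets
\[
\Delta A\;\leq\;\sum_i\bigl(-8+4\,\deg_T(S_i)\bigr)\;=\;-8(k+1)+8k\;=\;-8,
\]
which is exactly the paper's conclusion. So your strategy is right, but the quantitative input you feed it---the claim that one bridge leaves the four inside hops intact---has to be replaced by ``one bridge kills at most two of them''; this weakens the leaf bound to $-4$ and shifts the per-bridge cost to $4$ for \emph{every} incident bridge, after which the degree-sum identity (not $\sum(\deg-1)=k-1$) closes the argument.
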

            \begin{proof}
                We classify subswarms by the number of bridges by which they are connected to the rest of the swarm.
                \paragraph{{\bfseries\#bridges $=1$ (leaves).}}
                    We first look at a subswarm $S'$ that is connected by only one bridge $B$ to the remaining part of $\calS$.
                    Later, we will call a subswarm with this property a \emph{leaf}.
                    For our analysis, we separate $S'$ from $\calS$, by splitting $B$ into two parts (e.g., this could be done by removing $\leq2$ robots).
                    Afterwards, we remove the part that is then connected to $S'$ by executing a sufficient number of HV hops.
                    The remaining subswarm $\tilde{S}'$ from $S'$ then only allows \DiagAB\ hops.
                    Now we start with the same construction as in the proof of \reflem{lem:area_nobridges}, but this time
                    applied only to the subswarm $\tilde{S}'$.
                    Accordingly, we end up in the equation:
                    $$N^{\tilde{S}'}_{\calX^+}+N^{\tilde{S}'}_{\calZ^+}=N^{\tilde{S}'}_{\calX^-}+N^{\tilde{S}'}_{\calZ^-}+4$$
                    In contrast to the proof of \reflem{lem:area_nobridges}, here $N^{\tilde{S}'}_{\calZ^+}$ can be bigger
                    than $0$, in case $B$ hinders the hops of all existing $\calZ^+$ supercorners.

                    Now we estimate the worst case for the number of diagonal hops towards the inside of $\tilde{S}'$ that could have
                    been hindered by the bridge $B$.
                    There are two cases how $B$ can hinder a diagonal hop:
                    \begin{enumerate}
                        \item It occupies white marked cells in the \DiagAB\ patterns.
                        \item It produces a collision with a robot that wants to perform a \DiagAB\ hop.
                    \end{enumerate}
                    As the width of $B$ by definition is $\leq2$, it can hinder at most two hops towards the inside of $\tilde{S}'$.
                    \begin{figure}[h]
                       \begin{center}
                           \includegraphics[width=0.65\textwidth]{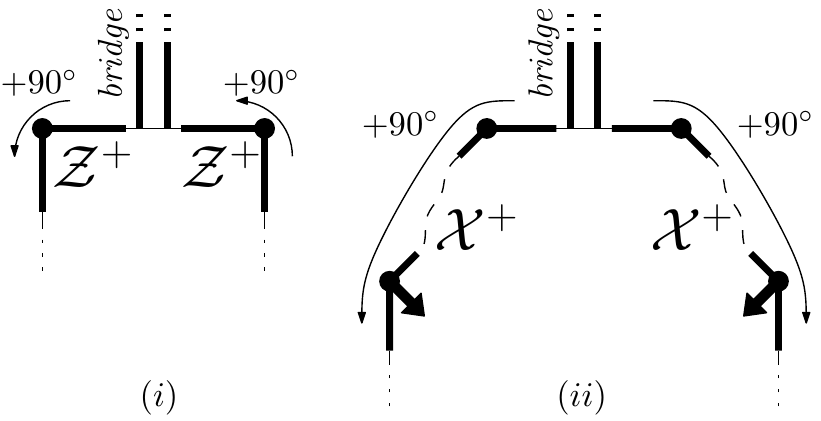}
                       \end{center}
                       \caption{A bridge hinders at most two hops towards the swarm's inside.}
                       \label{fig:hindercases}
                    \end{figure}
                    In \reffig{fig:hindercases}, we see that the $\calZ^+$ supercorners ($(i)$) provide the worst case, because here we can hinder \emph{all} hops
                    of two convex $+90^\circ$ supercorners.
                    If, e.g., hops of $\calX^+$ supercorners are hindered instead ($(ii)$),
                    then afterwards still two of their robots hop towards the swarm's inside.

                    For the worst case, we get the equation
                    \begin{eqnarray*}
                        N^{\tilde{S}'}_{\calX^+}+N^{\tilde{S}'}_{\calZ^+} & = & N^{\tilde{S}'}_{\calX^-}+N^{\tilde{S}'}_{\calZ^-}+4 \\
                        N^{\tilde{S}'}_{\calX^+}+2 & = & N^{\tilde{S}'}_{\calX^-}+N^{\tilde{S}'}_{\calZ^-}+4 \\
                        N^{\tilde{S}'}_{\calX^+}   & = & N^{\tilde{S}'}_{\calX^-}+N^{\tilde{S}'}_{\calZ^-}+2\mathrm{.}
                    \end{eqnarray*}
                    By construction, every $\calX^+$ executes two hops towards the swarm's inside, while $\calX^-$ does the
                    same towards the outside.
                    As we, because of the prerequisites of the lemma, do not have \convex\ progress, all $\calZ^-$ hops are hindered by inside robots (I.e., $N^{\tilde{S}'}_{\calZ^-}$ might be $>0$.).
                    Then the included area behave as:
                    $$\Delta\mathrm{\area}^{S'}\leq-2(N^{\tilde{S}'}_{\calX^+}-N^{\tilde{S}'}_{\calX^-})=-2(N^{\tilde{S}'}_{\calZ^-}+2)\leq-4\mathrm{.}$$
                    I.e., in $\tilde{S}'$ respectively $S'$, the included area becomes smaller by at least 4 instead 8 for swarms without bridges.
                \paragraph{{\bfseries\#Bridges $\geq1$.}}
                    We can generalize this worst case construction to a subswarm $S^\star$ that is connected by $k\geq1$ bridges to the remaining swarm.
                    For this, analogously to the one-bridge-case, we first build $\tilde{S}^\star$ from $S^\star$ by removing the bridges.

                    There, every bridge can still hinder at most 2 hops towards the subswarm's inside and, as before, the worst case
                    means, that these are $\calZ^+$ corners.
                    We get $N^{\tilde{S}^\star}_{\calZ^+}=2k$.
                    So, in total
                    \begin{eqnarray*}
                        N^{\tilde{S}^\star}_{\calX^+}+2k & = & N^{\tilde{S}^\star}_{\calX^-}+N^{\tilde{S}^\star}_{\calZ^-}+4 \\
                        N^{\tilde{S}^\star}_{\calX^+}   & = & N^{\tilde{S}^\star}_{\calX^-}+N^{\tilde{S}^\star}_{\calZ^-}+4-2k
                    \end{eqnarray*}
                    and
                    \begin{eqnarray*}
                        \Delta\mathrm{\area}^{S^\star}  & \leq & -2(N^{\tilde{S}^\star}_{\calX^+}-N^{\tilde{S}^\star}_{\calX^-})\\
                                                        & = & -2(N^{\tilde{S}^\star}_{\calZ^-}+4-2k)\\
                                                        & \stackrel{\scriptsize(\ast)}{\leq} & -8+4k\mathrm{,}
                    \end{eqnarray*}
                    where for $N^{\tilde{S}^\star}_{\calZ^-}=0$ $(\ast)$ becomes an equal sign (worst case).
                   
                    This means that in the worst case, for $2$ bridges the area does not change and for $>2$ bridges it even increases.
                    We will compensate this by showing that there are still enough subswarms that are only connected by a single bridge (leaves).

                    For showing this, we first transform the swarm to a tree.
                \paragraph{{\bfseries Tree.}}
                    \begin{figure}[h]
                       \begin{center}
                           \includegraphics[width=\textwidth]{}
                       \end{center}
                       \caption{Transformation of a swarm with bridges to a tree.}
                       \label{fig:treetransformation}
                    \end{figure}
                    Cf.\ \reffig{fig:treetransformation}.
                    We divide the swarm into bridges and subswarms $S_i$
                    (in the figure, the subswarms $S_i$ are named $A,B,\ldots, G$.).
                    In the graph representation, every $S_i$ becomes a node while every bridge is interpreted as an edge.
                    The resulting graph is a tree, because in our construction an existing cycle would represent an ``area'',
                    i.e., one of the subswarms $S_i$, and so
                    would contradict our construction in which every $S_i$ is a node.
                \paragraph{{\bfseries Summarized \area\ progress.}}
                    We estimate the total amount of area progress, by inductively constructing this tree:
                    We start with the root node $v_0$, which has degree $k_0$.
                    Then also $k_0$ leaves are connected to this node.
                    We know that for this $v_0$ in the worst case \area\ behaves like
                    $\Delta\mathrm{\area}^{v_0}=-8+4k_0$, which can be $\geq0$.
                    But adding the area change of the leaves, we get
                    $\Delta\mathrm{\area}\leq(-8+4k_0) -4k_0=-8$.

                    Now we inductively add the other inner nodes to the tree.
                    Then one can show that for every tree the following holds:
                    Every inner node $v$ of degree $k$ serves $k-2$ additional leaves.
                    If we associate these leaves to $v$, we get $\Delta\area^v\leq (-8+4k) - 4(k-2)=0$.
                    So $v$ does not have any bad impact on the change of total included area:
                    $\Delta\area = \Delta\area^{v_0}+\Delta\area^v\leq -8 +0 = -8$.
                    This finishes the proof.
            \end{proof}

        \reflem{lem:area_nobridges} and \ref{lem:area_bridges} immediately lead to \reflem{cor:area}.
    \subsection{Proof of \reflem{lem:boundarycorrect}}\label{ssec:proofboundarycorrect}
        \begin{proof}
            In this proof, if talking about collisions that inhibit hops, we often analyze
            the case when, in the viewing range of a robot, a \DiagA\ pattern
            changes to \DiagB\ and vice versa.
            This is relevant, because by definition a \DiagB\ hop is only inhibited of two Inhibit patterns
            match at the same time (one in each direction), while for \DiagA\ hops only one is needed.
            Now, we start with the proofs:

            \ref{enum:bc_convexhop}.)
            The hop of $r$ can be hindered by inside robots by two reasons.
            $a)$ The Hop pattern of $r$ changes:
                This is impossible, as for this outside robots must be added.
            \begin{figure}[h]
                \begin{center}
                    \includegraphics[width=0.7\textwidth]{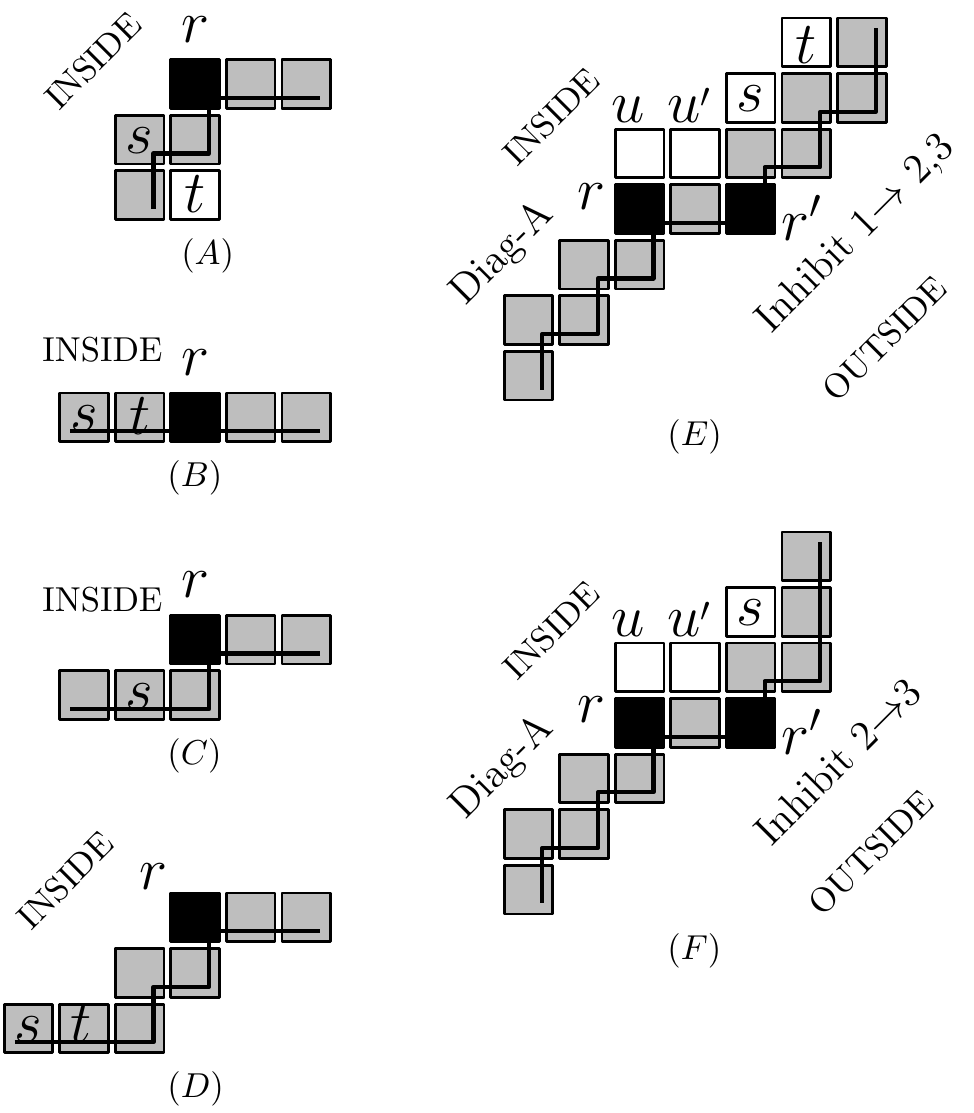}
                \end{center}
                \caption{Illustrations for the proof of \reflem{lem:boundarycorrect}.}
                \label{fig:innerbotsproof}
            \end{figure}

            $b)$ Collisions:
                $(i)$: $r$ changes from \DiagB\ to \DiagA: This is not possible because of $a)$.

                $(ii)$: Some robot $r'$ becomes part of an Inhibit pattern that makes him collide with $r$:
                    This is not possible because if $r$ checks for collisions, inhibition patterns do not include inside robots.

            \ref{enum:bc_concavehop}.) Cf.\ \reffig{fig:innerbotsproof}.
            The bold polygonal line marks the outer boundary.
            The hop of a robot $r$ that shall hop towards the swarm's outside can be enabled by inside robots by two reasons.
            $a)$ The robot positions in the viewing range of $r$ change:
                $(i)$: $r$ changes from \DiagA\ to \DiagB.
                    \reffig{fig:innerbotsproof}.$(A)$:
                    This is not possible, because for this, $s$ must be removed and $t$ occupied.
                    Both is not possible.

                $(ii)$: $r$ was located on a \emph{quasi line} and changes to \DiagAB.
                    $\alpha)$: \reffig{fig:innerbotsproof}.$(B)$: $r$ was located on a group of horizontally aligned robots.
                    This would require to remove at least the robot $t$.
                    $\beta)$: \reffig{fig:innerbotsproof}.$(C)$: $r$ was located on a stairway of height $1$.
                    This would require to remove at least the robot $s$.
                    $\gamma)$: \reffig{fig:innerbotsproof}.$(D)$: $r$ was located at an endpoint of a stairway of height $2$.
                    This would require to remove one of the robots $s,t$.
                    $\delta)$: $r$ could also not be located inside a stairway, because then the white marked cells in the \DiagAB\ patterns are not both
                    free and also cannot be emptied.

            $b)$ Collisions:
                $(i)$: $r$ was \DiagA\ and becomes \DiagB. This is not possible because of $a).(i)$. 
                $(ii)$: Previously, $r$ collided with some robot $r'$ and
                        after adding inside robots, they do not collide anymore.
                        \reffig{fig:innerbotsproof}.$(E)$: 
                        We assume, that for $r$ a \DiagA\ pattern matches (The proof for \DiagB\ is the same.).
                        Initially, the Inhibit 1 pattern inhibits the hop.
                        The cells $u,u'$ must stay empty because else the Hop pattern for $r$ would be destroyed.
                        But we can change the occupancy of the cells $s,t$:
                        $\alpha)$: If we add a robot to $t$, while $s$ stays empty, then the Inhibit 1 pattern becomes an Inhibit 2 pattern
                            and the collision is still present.
                        $\beta)$: If we add a robot to $s$, then, independent of $t$, the Inhibit 3 pattern matches
                            and the collision is still present.
                        $\gamma)$: If initially an Inhibit 2 pattern induces the collision ($(F)$), then adding a robot to $s$ makes
                        the Inhibit 2 pattern become an Inhibit 3 pattern.
                        This covers all variants, because Inhibit 3 patterns cannot be transformed.
        \end{proof}
\newpage
\begin{sloppy}
\bibliography{references}
\end{sloppy}
\end{document}